\def\clift#1{#1^{\scriptscriptstyle{\mathrm{C}}}}
\def\vlift#1{#1^{\scriptscriptstyle{\mathrm{V}}}}
\def\C{\mathcal{C}}
\def\D{\mathcal{D}}
\def\fpd#1#2{{\displaystyle\frac{\partial #1}{\partial #2}}}
\def\lie#1{{\mathcal{L}}_{#1}}
\def\R{\mathbb{R}}
\def\vf#1{\frac{\partial}{\partial #1}}
\def\vectorfields#1{{\cal X}(#1)}
\def\im{\mathop{\mathrm{im}}}
\font\frak=eufm10 scaled\magstep1
\def\goth #1{\hbox{{\frak #1}}}
\def\g{\goth{g}}
\def\conn#1#2#3{\setbox1=\hbox{$\scriptstyle{#2}{#3}$}%
\setbox2=\hbox to\wd1{$\hfil\scriptstyle{#1}\hfil$}
\Gamma^{\!\box2}_{\!\box1}}
\def\barconn#1#2#3{\setbox1=\hbox{$\scriptstyle{#2}{#3}$}%
\setbox2=\hbox to\wd1{$\hfil\scriptstyle{#1}\hfil$}
\check{\Gamma}^{\!\box2}_{\!\box1}}
\def\onehalf{{\textstyle\frac12}}
\def\hook{{\mathchoice
{\vrule height 0pt depth 0.4pt width 3pt \vrule height 5pt depth 0.4pt
\kern 3pt}
{\vrule height 0pt depth 0.4pt width 3pt  \vrule height 5pt depth
0.4pt\kern 3pt}
{\vrule height 0pt depth 0.2pt width 1.5pt  \vrule height 3pt depth
0.2pt width 0.2pt \kern 1pt}
{\vrule height 0pt depth 0.2pt width 1.5pt  \vrule height 3pt depth
0.2pt width 0.2pt \kern 1pt} }}
\newtheorem{thm}{Theorem}
\newtheorem{prop}{Proposition}
\newtheorem{cor}{Corollary}
\begin{document}

\title{Anholonomic frames in constrained dynamics}

\author{M.\ Crampin and T.\ Mestdag\\[2mm]
{\small Department of Mathematical Physics and Astronomy, Ghent University}\\
{\small Krijgslaan 281, B-9000 Ghent, Belgium}}

\date{}

\maketitle

{\small {\bf Abstract.} We demonstrate the usefulness of anholonomic
frames in the contexts of nonholonomic and vakonomic
systems. We take a consistently
differential-geometric approach. As an application, we
investigate the conditions under which the dynamics of the two systems
will be consistent.  A few illustrative examples confirm the results.
\\[2mm] {\bf Mathematics Subject Classification (2000).} 34A26,
37J60, 70G45, 70G75, 70H03.\\[2mm] {\bf Keywords.} Lagrangian system,
nonholonomic constraints, anholonomic frames, vakonomic dynamics}

\section{Introduction}

A frame on an $n$-dimensional differentiable manifold $Q$ is a local
basis of vector fields $\{X_i\}$, $i=1,2,\ldots,n$.  A frame
consisting of coordinate vector fields $\partial/\partial q^i$ is
called, naturally enough, a coordinate frame, or more obscurely a
holonomic one; a frame which is not holonomic, or not known to be so,
is said to be anholonomic.  The term dates back at least as far as
Schouten's {\em Ricci-Calculus\/} \cite{Sch}.  Why one says
anholonomic for frames but nonholonomic for constraints in dynamics,
with closely related meanings, is anybody's guess.

In order to test whether or not a frame $\{X_i\}$ is a coordinate
frame (that is, whether coordinates may be found such that
$X_i=\partial/\partial q^i$) one computes the brackets $[X_i,X_j]$.
These vanish for a coordinate frame; for an anholonomic frame we may
write
\[
[X_i,X_j]=R^k_{ij}X_k
\]
for some locally defined functions $R^k_{ij}$ on $Q$. These functions
are collectively called by Schouten the object of anholonomity,
because of their role in determining whether or not the frame is
truly anholonomic.

Anholonomic frames can be very useful in certain situations in
dynamics because they can be adapted to geometrical requirements in a
way that may be impossible with a coordinate frame.  We have exploited
this possibility in relation to systems with symmetry in previous
publications \cite{Routh,releq,Lie,red,invlag}. In this paper, we
promote the use of anholonomic frames in the study of dynamical
systems subject to nonholonomic linear constraints.

Our paper is concerned entirely with dynamical systems of finitely
many degrees of freedom which are capable, broadly speaking, of a
Lagrangian formulation; and it studies the systems of interest by
uncompromisingly differential-geometric means.  One instance of this
is that, for us, the dynamics is always represented by a vector field;
the possible motions of a dynamical system are given by the integral
curves of the dynamical vector field, and the dynamical differential
equations are those that determine the integral curves of the
dynamical vector field.  When one is dealing with a Lagrangian system
(without constraints, for simplicity) one has therefore to interpret
the Euler-Lagrange equations as implicitly defining a vector field:\
this will in fact be a vector field of special type --- a so-called
second-order differential equation field --- on the tangent bundle
$TQ$ of the configuration space $Q$.  Special techniques have been
developed for the study of the differential geometry of tangent
bundles and second-order differential equation fields, which have been
shown to be useful in many applications, such as the inverse problem
of Lagrangian mechanics and the study of qualitative features of
systems of second-order differential equations.

Our insistence on always thinking of the dynamics as a vector field of
second-order type on $TQ$ is one distinctive feature of our approach.
A second is our whole-hearted use of anholonomic frames.  For example,
we reformulate the Euler-Lagrange equations (which of course are
normally expressed in terms of coordinates) in a frame-dependent but
entirely coordinate-free manner.  To do so we first show how to
lift an anholonomic frame on $Q$ to one on $TQ$. In keeping with our
differential-geometric approach we do not base our discussion of
Euler-Lagrange equations on a variational principle, but instead
derive the equations ultimately by the method of the Cartan form (see \cite{CP}, for example, for a general description, and Section~2.3 for one adapted to our needs in this paper).

The concept of quasi-velocities has a very natural place in our theory
--- indeed, we would claim that the only way to understand
quasi-velocities properly is to view them from the perspective of the
theory of anholonomic frames.  The thoroughly misleading --- indeed,
incoherent --- concept of quasi-coordinates to be found in some
classical textbooks on mechanics is thereby avoided.  It is argued in
a recent paper by Bloch {\em et al.}\ \cite{BMZ} that although the
benefits of the use of quasi-velocities have always been beyond any
doubt, the mathematical foundations of the theory in textbooks such as
\cite{Greenwood} and \cite{Neimark} have not always been built up as
rigourously as they should.  We agree; and in fact we take a more
radical approach to the problem than do the authors of \cite{BMZ}.
For example, the nearest these authors come to a frame-based version
of the Euler-Lagrange equations is Hamel's equations.  We on the other
hand view Hamel's equations, however useful they may be in practical
applications, as just a half-way house --- partly frame- and partly
coordinate-based.  We show in our paper how to derive Hamel's
equations, but we make no use of them.

The differential-geometric machinery necessary for our approach is
described in Section~2.1 below. Our version of the Euler-Lagrange equations is obtained in Proposition~1 in Section~2.2. In Section~2.3 we show how to obtain these equations by means of the Cartan forms.

There has been a long discussion in the literature about two distinct
approaches to dealing with dynamical systems subject to nonholonomic
constraints, which are called respectively nonholonomic and vakonomic
mechanics.  According to \cite{Bloch,Cortes,Rumiantsev} the discussion
dates back to the end of the nineteenth century, with Korteweg,
H\"older, Voronec and Suslov among its more famous participants.
Nonholonomic mechanics is the classical method for deriving equations
of motion for systems with constraints and their associated reaction
forces from the Lagrange-d'Alembert principle, while the equations for
vakonomic systems (sometimes also termed variational nonholonomic
systems) follow from a variational principle where one looks for
extremals in the class of curves which satisfy the constraints.
Experiments such as the one described in \cite{LewisMurray} suggest
that the nonholonomic equations of motion are the correct ones to use
for mechanical systems with nonholonomic constraints.  The vakonomic
theory, on the other hand, finds applications in such fields as
economics and LC-circuits (see for example \cite{CDMM}).

In Section 3 we show in Proposition~2 how our methods may be used to formulate a
version of the Lagrange-d'Alembert principle and derive the
nonholonomic dynamics for a system subject to nonintegrable linear
constraints, provided it is regular in an appropriate sense.

In Section 4 we turn our attention to vakonomic systems.  We derive the equations satisfied by the dynamical
vector fields of such systems. In Propositions~3 and 4 we give a version of these equations, when the multipliers are required to lie in the image of a certain section. Section 5
concentrates in much detail on the question of when the nonholonomic
problem is consistent with the vakonomic one.  This question has of
course been investigated by many authors in the past (see for example
\cite{CF,CDMM,unified,Favretti,IMMS,Rumiantsev,Zampieri} --- this list is
by no means exhaustive). By carefully analyzing the dynamical vector
fields involved we show that in fact two notions of consistency need
to be distinguished. The corresponding consistency conditions can be found in Theorem~2 and Corollary~1, repectively. Inspired by a recent paper of Fernandez and
Bloch \cite{FandB} we examine the existence of a so-called
`variational Lagrangian' and we clarify its role with regard to the
consistency conditions (Theorem~1 and Corollary~2).  We considerably generalize some of the
results of \cite{FandB}, stated there for Abelian Chaplygin systems
only, by showing that they apply also to non-Abelian Chaplygin
systems (in the sense of e.g.\ \cite{Koiller}) and even in some cases
to nonholonomic systems with linear constraints in general (see Propositions~5 and 6).  Our
discussion of Chaplygin systems is to be found in Section 6. As in earlier sections, the ease with which we obtain the results relies on the choice of an appropriate frame. In the case of a Chaplygin system the frame is adapted to the situation of a system with symmetry. In the literature one can, of course, also find other approaches to (nonholonomic and vakonomic) systems with symmetry and their reduction, e.g.\ using the theory of Lie algebroids \cite{algebroid,IMMS,ML}; but the benefits of our methods are of course not available there.

The paper ends with some illustrative examples, namely the class of
nonholonomic systems discussed recently in \cite{BFM}, and the
two-wheeled carriage.  We give new and illuminating derivations of
some known results.  We also correct the errors which have crept into
some of the accounts of the two-wheeled carriage in the
literature.

\section{Anholonomic frames, quasi-velocities and the Euler-\\
Lagrange equations}

In this section we describe the basic constructions associated with
anholonomic frames, leading to an appropriate formulation of the
Euler-Lagrange equations, in preparation for the discussion of
nonholonomic systems which commences in the following section.

\subsection{Some aspects of tangent bundle geometry}

We shall be concerned with second-order dynamics, that is to say, with
dynamical systems represented as vector fields of second-order type on
velocity phase space, which is the tangent bundle $TQ$ of
configuration space $Q$.  We denote the space of sections of
$TQ\to Q$, that is the space of vector fields on $Q$, by
$\vectorfields{Q}$; it is a $C^\infty(Q)$-module, where $C^\infty(Q)$
denotes the ring of smooth real-valued functions on $Q$.  (Actually all
considerations in this paper are local rather than global, but we shall
not continually draw attention to this fact.)  Though in principle we
prefer to manage without coordinates, we shall want to use them
sometimes, especially in this section:\ we point out that we
denote the generalized velocities, in other words the fibre
coordinates on $TQ$ naturally associated with coordinates $q^i$ on
$Q$, by $u^i$, reserving $\dot{q}^i$ for the actual derivative of
$q^i$ along a curve.  Our policy in this section is to give informal
but intrinsic definitions of objects and constructions so that it will
be clear that they are well-defined, but to supplement these with
coordinate expressions for security:\ this is why we make more use of
coordinates here than later.

We shall work with anholonomic frames on $Q$.  Our first task is to
show how to lift an anholonomic frame from $Q$ to $TQ$, to give an
anholonomic frame there.  Since the dimension of $TQ$ is $2n$ we need
to double the number of vector fields in the frame.  Fortunately there
are two canonical ways of lifting a vector field from $Q$ to $TQ$, the
so-called complete and vertical lifts; we apply them both to each
member of the frame on $Q$.  These constructions are described in
detail in several texts, including \cite{CP}; we give a brief account
below for the convenience of the reader.

Let $Z$ be any (locally defined) vector field on $Q$, with flow $\varphi_t$. By taking the tangent map to $\varphi_t$ for each $t$ we obtain a flow on $TQ$; the corresponding vector field on $TQ$ is called the complete
lift (sometimes the tangent lift) of $Z$ and is denoted by $\clift{Z}$. In terms of coordinates
\[
\clift Z=Z^i\vf{q^i}+u^j\fpd{Z^i}{q^j}\vf{u^i},\quad
\mbox{if }Z=Z^i\vf{q^i}.
\]
The second canonical way of lifting the vector field $Z$ from $Q$ to
$TQ$ yields its vertical lift $\vlift{Z}$:\ this is tangent to the
fibres of the projection $\tau: TQ \to Q$, and on the fibre over $q\in
Q$ it coincides with the constant vector field $Z_q$.  In coordinates
\[
\vlift{Z}=Z^i\vf{u^i}.
\]
We have $\tau_*\clift{Z}=Z$, which is to say that $\clift{Z}$ is
projectable and projects onto $Z$, while $\tau_*\vlift{Z}=0$ since
$\vlift{Z}$ is vertical.  The brackets of complete and vertical lifts
of vector fields $Y$ and $Z$ are
\[
[\clift Y,\clift Z] = \clift{[Y,Z]},\quad
[\clift Y,\vlift Z] = \vlift{[Y,Z]},\quad
[\vlift Y,\vlift Z]=0.
\]
Note that although the map $Z\mapsto\vlift{Z}$ is
$C^\infty(Q)$-linear, the map $Z\mapsto\clift{Z}$ is only
$\R$-linear:\ in fact for $f\in C^\infty(Q)$
\[
\clift{(fZ)}=f\clift{Z}+\dot{f}\vlift{Z}
\]
where $\dot{f}$ is the so-called total derivative of $f$, a function
on $TQ$ defined by $\dot{f}(q,u)=u(f)$ and given in coordinates by
\[
\dot{f}=u^i\fpd{f}{q^i}.
\]
This fact will have an important role to play shortly.  (The first
term on the right of the equation above for $\clift{(fZ)}$ should
strictly speaking be $(\tau^*f)\clift{Z}$, but we shall not bother to
distinguish notationally between a function on $Q$ and its pull-back
to $TQ$.)

Given an anholonomic frame $\{X_i\}$ on $Q$ we can construct from it
the anholonomic frame $\{\clift{X_i},\vlift{X_i}\}$ on $TQ$.  This
lifted frame will play a central role in our account of dynamics on
$TQ$.

With any local frame $\{X_i\}$ there is associated its dual coframe
$\{\theta^i\}$, consisting of locally defined 1-forms such that
$\theta^i(X_j)=\delta^i_j$.  Now a 1-form $\theta$ on $Q$ defines a
function say $v_\theta$ on $TQ$, linear on the fibres, by
$v_\theta(q,u)=\theta_q(u)$; if $\theta=\theta_idq^i$ then
$v_\theta=\theta_iu^i$.  The $n$ functions $v_{\theta_i}$ associated
in this way with a frame via the dual coframe are called the
quasi-velocities of the frame; we denote them by $v^i$ for
simplicity. Another way of defining the quasi-velocities is to say
that $v^i(q,u)$ is just the $i$th component of the vector $u\in T_qQ$
when it is expressed in terms of the frame at $q$.

We shall need expressions for the derivatives of the
quasi-velocities by complete and vertical lifts, and in particular
their derivatives along the members of the lifted frame
$\{\clift{X_i},\vlift{X_i}\}$.  To find them, the following two
formulas are indispensable:
\[
\clift{Z}(v_\theta)=v_{\lie{Z}\theta},\quad
\vlift{Z}(v_\theta)=\theta(Z).
\]
From the second of these, $\vlift{Z}(v^i)=Z^i$, where (for any vector
field) $Z^i=\theta^i(Z)$ is the $i$th component of $Z$ with respect to
the frame.  From the first, $\clift{Z}(v^i)=-[Z,X_j]^iv^j$.  Thus we
can express $\clift{X_i}(v^j)$ in terms of the object of
anholonomity:\ if $[X_i,X_j]=R^k_{ij}X_k$ then
\[
\clift{X_i}(v^j)=-R^j_{ik}v^k.
\]
On the other hand
\[
\vlift{X_i}(v^j)=\delta^j_i,
\]
from the second formula displayed above.

\subsection{The Euler-Lagrange field}

We now turn our attention to dynamics.  We deal with second-order
systems, which means that we expect to be able to write the dynamical
equations as second-order ordinary differential equations of the form
$\ddot{q}^i=F^i(q,\dot{q})$.  As we explained in the introduction, we
follow the time-honoured principle of differential geometry that a
suitable system of ordinary differential equations should be replaced
by the vector field whose integral curves are its solutions.  A system
of equations of the form $\ddot{q}^i=F^i(q,\dot{q})$ corresponds in
this way to a vector field $\Gamma$ on $TQ$, but one of a special
character:\ we must have $\tau_{*(q,u)}\Gamma=u$, which is to say that
$\Gamma$ must take the form
\[
\Gamma =u^i\vf{q^i}+F^i\vf{u^i}
\]
in coordinates, for then the equations for its integral curves are
\[
\dot{q}^i=u^i,\quad \dot{u}^i=F^i(q,u),
\]
as required. Such a vector field is, naturally enough, called a
second-order differential equation field.

One useful property of second-order differential equation fields,
which is most easily checked by a coordinate calculation, is that for
any vector field $Z$ on $Q$, $[\clift{Z},\Gamma]$ is
vertical.

Suppose we have an anholonomic frame $\{X_i\}$ on $Q$; we can then
express any vector field on $TQ$ in terms of the lifted frame
$\{\clift{X_i},\vlift{X_i}\}$. Since $\tau_*\clift{X_i}=X_i$, the
condition for a vector field to be a second-order differential
equation field, when expressed in these terms, is that it should take
the form
\[
v^i\clift{X_i}+\Gamma^i\vlift{X_i}
\]
for some functions $\Gamma^i$ on $TQ$, where the coefficients $v^i$
are the quasi-velocities of the frame.

Note that for any function $f$ on $Q$ and any second-order
differential equation field $\Gamma$ we have $\Gamma(f)=\dot{f}$.
Recall that the total derivative $\dot{f}$ also appears in the expression
for $\clift{(fZ)}$, which suggests the possibility of constructive
cancellation. Let $\Gamma$ be any second-order differential equation
field and $L$ any function on $TQ$, and consider the map
$\varepsilon:\vectorfields{Q}\to C^\infty(TQ)$ given by
\[
\varepsilon(Z)=\Gamma(\vlift{Z}(L))-\clift{Z}(L).
\]
Then $\varepsilon$ is evidently $\R$-linear; it is in fact
$C^\infty(Q)$-linear, as the following calculation shows:
\[
\varepsilon(fZ)=\Gamma(\vlift{fZ}(L))-\clift{(fZ)}(L)
=\dot{f}\vlift{Z}(L)+f\Gamma(\vlift{Z}(L))-f\clift{Z}(L)-\dot{f}\vlift{Z}(L)
=f\varepsilon(Z).
\]
Thus $\varepsilon$ behaves like a 1-form, except that it takes its
values in $C^\infty(TQ)$ rather than $C^\infty(Q)$:\ it is in fact a
1-form along the tangent bundle projection $\tau$.

A regular Lagrangian $L$ determines a second-order dynamical system
via its Euler-Lagrange equations. These are usually written as
\[
\frac{d}{dt}\left(\fpd{L}{\dot{q}^i}\right)-\fpd{L}{q^i}=0.
\]
When $L$ is regular, which means that its Hessian with respect to the
velocities is nonsingular, these equations can be solved for the
$\ddot{q}^i$, or in other words written in the form
$\ddot{q}^i=F^i(q,\dot{q})$. They correspond to a second-order
differential equation field $\Gamma$, therefore. In fact there is a
unique second-order differential equation field $\Gamma$ such that
\[
\Gamma\left(\fpd{L}{u^i}\right)-\fpd{L}{q^i}=0,
\]
and its integral curves are the solutions of the Euler-Lagrange
equations. But in view of the remarks above about $\varepsilon$, we
see that $\Gamma$ must in fact satisfy
\[
\Gamma(\vlift{Z}(L))-\clift{Z}(L)=0
\]
for any vector field $Z$ on $Q$.  Moreover, to determine $\Gamma$,
assuming that it is a second-order differential equation field and
that $L$ is regular, it is enough to require that
\[
\Gamma(\vlift{X_i}(L))-\clift{X_i}(L)=0, \quad i=1,2,\ldots n,
\]
for the vector fields $X_i$ of any frame on $Q$, even an anholonomic
one.  To be explicit, we may take
$\Gamma=v^i\clift{X_i}+\Gamma^i\vlift{X_i}$, when the equation above
becomes
\[
\vlift{X_i}(\vlift{X_j}(L))\Gamma^j=
\clift{X_i}(L)-v^j\clift{X_j}(\vlift{X_i}(L)).
\]
Now $\vlift{X_i}(\vlift{X_j}(L))$ are just the components of the
Hessian of $L$ expressed in terms of the anholonomic frame (recall
that vertical lifts commute):\ so these equations uniquely determine
the coefficients $\Gamma^i$, as required. We sum this discussion
up in the form of a proposition, which we regard as the fundamental
statement of regular Lagrangian dynamics.

\begin{prop}\label{Prop1}
Let $L$ be a regular Lagrangian on $TQ$.  There is a unique
second-order differential equation field $\Gamma$ such that
\[
\Gamma(\vlift{Z}(L))-\clift{Z}(L)=0
\]
for all vector fields $Z$ on $Q$. Moreover, $\Gamma$ may be determined
from the equations
\[
\Gamma(\vlift{X_i}(L))-\clift{X_i}(L)=0, \quad i=1,2,\ldots n,
\]
for any frame $\{X_i\}$ on $Q$ (which may be a coordinate frame or may be
anholonomic).
\end{prop}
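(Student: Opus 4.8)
The plan is to lean on the $C^\infty(Q)$-linearity of the map $\varepsilon(Z)=\Gamma(\vlift{Z}(L))-\clift{Z}(L)$ already verified above, since this single structural fact reduces the infinitely many conditions ``$\varepsilon(Z)=0$ for all $Z$'' to the $n$ frame conditions and, at the same time, shows that the characterisation does not depend on which frame is used. Concretely, I would write an arbitrary vector field relative to the frame as $Z=Z^iX_i$ with $Z^i\in C^\infty(Q)$; linearity then gives $\varepsilon(Z)=Z^i\varepsilon(X_i)$. Hence if $\varepsilon(X_i)=0$ for every member of one frame then $\varepsilon(Z)=0$ for every $Z$, while the converse is immediate. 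This already establishes the ``moreover'' clause, and because the argument is valid for a coordinate frame and for an anholonomic one alike, any frame that yields a field $\Gamma$ must yield the same one.

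It then remains only to show that exactly one second-order field solves the frame equations. Writing $\Gamma=v^i\clift{X_i}+\Gamma^i\vlift{X_i}$, the general form of such a field, and substituting into $\Gamma(\vlift{X_i}(L))-\clift{X_i}(L)=0$ produces the linear system
\[
\vlift{X_i}(\vlift{X_j}(L))\,\Gamma^j=\clift{X_i}(L)-v^j\clift{X_j}(\vlift{X_i}(L))
\]
for the unknown coefficients $\Gamma^i$, exactly as obtained above. The coefficient matrix is the velocity Hessian of $L$ expressed in the frame, so once I know it is nonsingular the $\Gamma^i$ are determined uniquely, giving both existence and uniqueness. Specialising to the coordinate frame, where $\clift{X_i}=\vf{q^i}$ and $\vlift{X_i}=\vf{u^i}$, recovers the classical Euler--Lagrange equations $\Gamma(\fpd{L}{u^i})-\fpd{L}{q^i}=0$ and confirms that the integral curves of $\Gamma$ are the expected solutions.

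The one point that needs a little care is the transition from regularity, which is defined through the coordinate Hessian $\spd{L}{u^i}{u^j}$, to nonsingularity of the frame Hessian $\vlift{X_i}(\vlift{X_j}(L))$. If $X_i=X_i^k\vf{q^k}$ then $\vlift{X_i}=X_i^k\vf{u^k}$, so that $\vlift{X_i}(\vlift{X_j}(L))=X_i^kX_j^l\spd{L}{u^k}{u^l}$; the frame Hessian is thus congruent to the coordinate Hessian via the matrix $(X_i^k)$, which is invertible because the $X_i$ form a frame. Consequently the two Hessians are nonsingular together, regularity of $L$ guarantees the solvability of the linear system above in every frame, and the uniqueness of $\Gamma$ follows. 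This congruence is the only step where the anholonomic nature of the frame could have caused trouble, and it is precisely where the invertibility of the frame matrix is used.
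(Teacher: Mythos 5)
Your proposal is correct and follows essentially the same route as the paper: the $C^\infty(Q)$-linearity of $\varepsilon$ reduces the condition to any one frame, and the resulting linear system $\vlift{X_i}(\vlift{X_j}(L))\,\Gamma^j=\clift{X_i}(L)-v^j\clift{X_j}(\vlift{X_i}(L))$ is uniquely solvable because the frame Hessian is nonsingular. Your explicit congruence $\vlift{X_i}(\vlift{X_j}(L))=X_i^kX_j^l\,\spd{L}{u^k}{u^l}$ merely spells out a step the paper leaves implicit in the remark that these are ``the components of the Hessian of $L$ expressed in terms of the anholonomic frame''.
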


The integral curves of $\Gamma$ are the solutions of the conventional
Euler-Lagrange equations of $L$, and we may therefore regard the first
displayed equation in the proposition as the most general form of the
Euler-Lagrange equations, and the second displayed equations as the
Euler-Lagrange equations relative to a frame.

Note that we do not assume that the Lagrangian is of mechanical type:\
the remarks above apply to any Lagrangian, provided it is regular.

These general forms of the Euler-Lagrange equations are very useful in
the discussion of theoretical questions, but in any specific
particular case it will eventually be necessary to introduce
coordinates.  One way of doing so, which partially takes cognizance of
an anholonomic frame, is to use some arbitrary coordinates $q^i$ on
$Q$, but the quasi-velocities $v^i$ of the frame for fibre
coordinates.  We next write the Euler-Lagrange equations for
the frame in terms of such coordinates.  To do so we need to express
$\clift{X_i}$ and $\vlift{X_i}$ in terms of the coordinate vector
fields, and this is most easily done by simply evaluating them on the
coordinates.  We have
\[
\clift{X_i}(q^j)=X_i(q^j)=X_i^j,\quad
\clift{X_i}(v^j)=-R^j_{ik}v^k,\quad
\vlift{X_i}(q^j)=0,\quad
\vlift{X_i}(v^j)=\delta^j_i,
\]
where $X_i=X_i^j\partial/\partial q^j$.  It follows that
\[
\clift{X_i}=X_i^j\vf{q^j}-R^j_{ik}v^k\vf{v^j},\quad
\vlift{X_i}=\vf{v^i}.
\]
The Euler-Lagrange equations become
\[
\Gamma\left(\fpd{L}{v^i}\right)-X_i^j\fpd{L}{q^j}+R^j_{ik}v^k\fpd{L}{v^j}=0,
\]
which are Hamel's equations in our notation.

\subsection{The method of the Cartan form}

The discussion above is limited to regular Lagrangians.  Later in the
paper we shall deal with a variational problem (the vakonomic problem)
for which the Lagrangian is very definitely not regular, and so the
methods developed so far will not apply.  To cope with this situation
we shall use the method of the Cartan forms, which are in fact globally defined forms. However, once again we need a
formulation in terms of a (local) anholonomic frame, which makes the
following account unusual in some respects.

We consider any Lagrangian $L$ on any $TQ$ equipped with the frame
lifted from an anholonomic frame $\{X_i\}$ on $Q$ whose dual is
$\{\theta^i\}$ and whose quasi-velocities are $v^i$.  The Cartan 1-form
of $L$ is $\vlift{X_i}(L)\theta^i$, the Cartan 2-form
\[
\omega=d\left(\vlift{X_i}(L)\theta^i\right).
\]
The energy of $L$ is $E=v^i\vlift{X_i}(L)-L$, and any vector field
$\Gamma$ satisfying
\[
\Gamma\hook\omega=-dE
\]
is an Euler-Lagrange field of $L$. A word of caution is required:\
once again we have not distinguished notationally between an object
on $Q$ and its pullback to $TQ$. Strictly speaking we should write
$\vlift{X_i}(L)\tau^*\theta^i$ for the Cartan 1-form. Bearing this in
mind it is easy to see that for any 1-form $\theta$ and vector field
$Z$ on $Q$,
\[
\theta(\clift{Z})=\theta(Z),\quad
\theta(\vlift{Z})=0,\quad
\clift{Z}\hook d\theta=Z\hook d\theta,\quad
\vlift{Z}\hook d\theta=0.
\]
In the case of the anholonomic frame and its dual, the third of these
leads to
\[
\clift{X_i}\hook d\theta^j=X_i\hook d\theta^j
=\lie{X_i}\theta^j-d(\theta^j(X_i))=-R_{ik}^j\theta^k.
\]
The equation $\Gamma\hook\omega=-dE$ is of course an equation between
1-forms on $TQ$. Evaluating it on $\vlift{X_i}$ and $\clift{X_i}$ in
turn leads to the pair of equations
\[
\vlift{X_i}(\vlift{X_j}(L))\left(\theta^j(\Gamma)-v^j\right)=0,\quad
\Gamma(\vlift{X_i}(L))-\clift{X_i}(L)=0.
\]
From the first of these, $\theta^j(\Gamma)=v^j$ when $L$ is regular,
which says that $\Gamma$ is a second-order differential equation
field; and then the second gives the Euler-Lagrange equations relative
to the frame.  But the equation $\Gamma\hook\omega=-dE$ stands even
when the Lagrangian is not regular; and in particular any
Euler-Lagrange field $\Gamma$ satisfies the equations
$\Gamma(\vlift{X_i}(L))-\clift{X_i}(L)=0$, though these equations may
not be enough to determine an Euler-Lagrange field uniquely.

\section{Nonholonomic dynamics}

We now consider a dynamical system subject to nonholonomic linear
constraints.  There are two (equivalent) ways of specifying such
constraints:\ as a distribution $\D$ on $Q$ (the constraint
distribution), such that at each $q\in Q$, $\D_q$ has the same
dimension $m$, and which is not integrable (in the sense of
Frobenius); or as a submanifold $\C$ of $TQ$ (the constraint
submanifold) which intersects each fibre in a linear subspace $\C_q$,
again of constant dimension $m$; of course $\C_q$ and $\D_q$ are the
same, just viewed from slightly different perspectives.

We can choose an anholonomic frame $\{X_i\}$ adapted to this situation
by taking its first $m$ members, say $\{X_\alpha\}$,
$\alpha=1,2,\ldots,m$, to span $\D$.  We write the remaining members
of the frame as $\{X_a\}$, $a=m+1,m+2,\ldots,n$.  Let $v^i$ be the
corresponding quasi-velocities:\ then $\C$ is the submanifold where
$v^a=0$.

We shall be interested in vector fields defined only on the constraint
submanifold $\C$.  A vector field $\Gamma$ on $\C$ (which for the
purpose of the following definition could be any submanifold of $TQ$)
will be said to be of second-order type if it satisfies the defining
condition for second-order differential equation fields,
$\tau_{*(q,u)}\Gamma=u$, for all $(q,u)\in\C$.  We shall furthermore be
interested in vector fields not only defined on the constraint
submanifold $\C$, but also tangent to it.  A vector field $\Gamma$ on
$\C$ will be tangent to $\C$ if and only if $\Gamma(v^a)=0$,
$a=m+1,m+2,\ldots,n$.  A vector field $\Gamma$ on $\C$ which is of
second-order type takes the form
\[
\Gamma=v^\alpha\clift{X_\alpha}+\Gamma^i\vlift{X_i}
\]
for some functions $\Gamma^i$ on $\C$. Now
\[
\clift{X_\alpha}(v^a)=-R^a_{\alpha i}v^i=-R^a_{\alpha\beta}v^\beta
\mbox{ on $\C$},
\]
so $v^\alpha\clift{X_\alpha}(v^a)=0$ since $R^a_{\alpha\beta}$ is
skew-symmetric in its lower indices.  But
$\vlift{X_i}(v^a)=\delta^a_i$, so for $\Gamma$ to be tangent to $\C$
we must have $\Gamma^a=0$.  That is to say, a vector field $\Gamma$ on
$\C$ which is of second-order type and which is tangent to $\C$ takes
the form
\[
\Gamma=v^\alpha\clift{X_\alpha}+\Gamma^\alpha\vlift{X_\alpha}
\]
with respect to the lifted anholonomic frame.

It is interesting to note that $\Gamma$, in the form just derived,
depends only on those vector fields of the frame which span $\D$. Let
us derive its representation with respect to another frame $\{Y_i\}$,
where again the $Y_\alpha$ span $\D$. Then
\[
Y_\alpha=A_\alpha^\beta X_\beta,\quad
Y_a=A_a^b X_b+A_a^\alpha X_\alpha,
\]
where the square matrices $(A_\alpha^\beta)$ and $(A_a^b)$, whose
entries are local functions on $Q$, are non-singular.  The
quasi-velocities $w^i$ corresponding to the new frame are given by
\[
w^a=\bar{A}^a_bv^b,\quad
w^\alpha=\bar{A}^\alpha_\beta(v^\beta-A^\beta_b\bar{A}^b_av^a),
\]
where the overbar indicates the inverse matrix.  Note that the level
sets $v^a=0$ and $w^a=0$ coincide, and that where $v^a=0$,
$w^\alpha=\bar{A}^\alpha_\beta v^\beta$. After a short calculation we
find that
\[
\Gamma=v^\alpha\clift{X_\alpha}+\Gamma^\alpha\vlift{X_\alpha}
=w^\alpha\clift{Y_\alpha}
+\bar{A}^\alpha_\beta(\Gamma^\beta
-\dot{A}^\beta_\gamma w^\gamma)\vlift{Y_\alpha}.
\]
This shows that indeed any vector field of second-order type on the
constraint submanifold $\C$ which is tangent to $\C$ can be expressed
entirely in terms of a local basis for $\D$, and incidentally gives
the transformation rule for the coefficients $\Gamma^\alpha$ under a
change of such a basis.

We now suppose given a Lagrangian $L$ on $TQ$.  We say that $L$ is
regular with respect to the constraints if for any local basis
$\{X_\alpha\}$ of $\D$, the symmetric $m\times m$ matrix whose entries
are $\vlift{X_\alpha}(\vlift{X_\beta}(L))$, functions on $\C$, is
nonsingular (this condition is easily seen to be independent of the
choice of basis).

\begin{prop}
Let $\C\subset TQ$ be the constraint submanifold for a system of
nonholonomic linear constraints, $\D$ the corresponding constraint
distribution, and $L$ a Lagrangian on $TQ$ which is regular with
respect to $\D$.  Then there is a unique vector field $\Gamma$ on $\C$
which is of second-order type, is tangent to $\C$, and is such that on
$\C$
\[
\Gamma(\vlift{Z}(L))-\clift{Z}(L)=0
\]
for all $Z\in\D$.  Moreover, $\Gamma$ may be determined from the
equations
\[
\Gamma(\vlift{X_\alpha}(L))-\clift{X_\alpha}(L)=0, \quad
\alpha=1,2,\ldots m,
\]
on $\C$, where $\{X_\alpha\}$ is any local basis for $\D$.
\end{prop}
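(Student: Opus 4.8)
The plan is to follow the template of Proposition~\ref{Prop1}, restricting attention to the distribution $\D$ and to the constraint submanifold $\C$. The discussion preceding the statement has already established that any vector field $\Gamma$ on $\C$ which is of second-order type and tangent to $\C$ must take the form
\[
\Gamma=v^\alpha\clift{X_\alpha}+\Gamma^\alpha\vlift{X_\alpha}
\]
relative to the adapted frame, where the $\{X_\alpha\}$ span $\D$; conversely, any $\Gamma$ of this form is of second-order type (since $\tau_*\Gamma=v^\alpha X_\alpha=u$ on $\C$) and tangent to $\C$ (since $\Gamma(v^a)=0$, using skew-symmetry of $R^a_{\alpha\beta}$ and $\vlift{X_\alpha}(v^a)=0$). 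Thus the two structural requirements fix the admissible shape of $\Gamma$ and leave only the $m$ coefficients $\Gamma^\alpha$ to be pinned down.

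First I would establish the equivalence of the two displayed systems of equations, and with it the independence of the chosen basis. For $Z\in\D$ set $\varepsilon(Z)=\Gamma(\vlift{Z}(L))-\clift{Z}(L)$, now regarded as a function on $\C$. The computation of Proposition~\ref{Prop1} carries over verbatim: using $\vlift{fZ}=f\vlift{Z}$, $\clift{(fZ)}=f\clift{Z}+\dot{f}\vlift{Z}$, and the fact that $\Gamma(f)=\dot{f}$ for $f\in C^\infty(Q)$ (which holds because $\Gamma$ is of second-order type on $\C$), one finds $\varepsilon(fZ)=f\varepsilon(Z)$, so $\varepsilon$ is $C^\infty(Q)$-linear. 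Since every local section of $\D$ can be written $f^\alpha X_\alpha$, it follows that $\varepsilon$ vanishes on all of $\D$ precisely when it vanishes on the basis members $X_\alpha$; this simultaneously shows that the frame equations are equivalent to the intrinsic condition and that they do not depend on which basis of $\D$ is used.

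It then remains to show that the frame equations determine $\Gamma^\alpha$ uniquely. Substituting the form of $\Gamma$ into $\Gamma(\vlift{X_\alpha}(L))-\clift{X_\alpha}(L)=0$ and using that vertical lifts commute gives the linear system
\[
\vlift{X_\beta}(\vlift{X_\alpha}(L))\,\Gamma^\beta
=\clift{X_\alpha}(L)-v^\beta\clift{X_\beta}(\vlift{X_\alpha}(L)),
\qquad\alpha=1,\ldots,m,
\]
on $\C$. The main point, and the one genuine difference from the unconstrained setting, is that because $\Gamma$ is tangent to $\C$ it carries no $\vlift{X_a}$ component, so the coefficient matrix is exactly the $m\times m$ block $\bigl(\vlift{X_\alpha}(\vlift{X_\beta}(L))\bigr)$ and not the full Hessian. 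The main obstacle is precisely to be sure that the equations close up on this block, with no surviving coupling to the suppressed directions $\vlift{X_a}$; this is what tangency to $\C$ secures, and it is the reason the appropriate hypothesis is regularity with respect to the constraints rather than regularity of $L$ on all of $TQ$. That block being nonsingular by assumption, the system has a unique solution for the $\Gamma^\alpha$ on $\C$, and the $\Gamma$ so constructed is the asserted vector field.
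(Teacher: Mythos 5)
Your proposal is correct and follows essentially the same route as the paper: establish $C^\infty(Q)$-linearity of $Z\mapsto\Gamma(\vlift{Z}(L))-\clift{Z}(L)$ to reduce to a basis of $\D$, use the previously derived form $\Gamma=v^\alpha\clift{X_\alpha}+\Gamma^\alpha\vlift{X_\alpha}$, and solve the resulting $m\times m$ linear system using regularity with respect to $\D$. The extra remarks on why the system closes on the $(g_{\alpha\beta})$ block are a faithful elaboration of the paper's preceding discussion rather than a different argument.
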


\begin{proof}
For any $L$ and $\Gamma$, the map
$Z\mapsto\Gamma(\vlift{Z}(L))-\clift{Z}(L)$ is $\C^\infty(Q)$-linear,
just as before. Thus $\Gamma(\vlift{Z}(L))-\clift{Z}(L)=0$ if and
only if $\Gamma(\vlift{X_\alpha}(L))-\clift{X_\alpha}(L)=0$ for a
local basis $\{X_\alpha\}$. But $\Gamma$ must take the form
\[
\Gamma=v^\alpha\clift{X_\alpha}+\Gamma^\alpha\vlift{X_\alpha},
\]
so that
\[
\vlift{X_\alpha}(\vlift{X_\beta}(L))\Gamma^\beta =
\clift{X_\alpha}(L)-v^\beta\clift{X_\beta}(\vlift{X_\alpha}(L)),
\]
which determines $\Gamma^\alpha$ by the regularity assumption.
\end{proof}

We regard the content of this proposition as the fundamental statement
of regular nonholonomic dynamics; it is our version of the
Lagrange-d'Alembert principle.  The displayed equations in the
statement of the proposition are the fundamental equations of regular
nonholonomic dynamics.  We could express the main content of the
proposition in terms of the map $\varepsilon:\vectorfields{Q}\to
C^\infty(TQ)$ introduced earlier by saying that $\varepsilon$ takes
its values in $\D^0$, the annihilator of $\D$.  To be more precise:\
$\D^0$ is a linear subbundle of $T^*Q$, and $\varepsilon$, which in
general is a 1-form along the tangent bundle projection $\tau$,
according to the proposition is a section of $\tau_\C^*\D^0$ (where
$\tau_\C:\C\to Q$ is the restriction of $\tau$ to $\C$).

For the proposition we do not need a full anholonomic frame.  What
purpose might the remaining vectors $\{X_a\}$ of such a frame serve,
one might ask.  Here are two uses for them.

In many formulations of the equations of nonholonomic dynamics,
multipliers appear (see \cite{BMZ} for example).  Having determined
$\Gamma$ by the method of the proposition, we may then form the
expressions $\Gamma(\vlift{X_a}(L))-\clift{X_a}(L)$. These will not be
zero:\ let us set
\[
\Gamma(\vlift{X_a}(L))-\clift{X_a}(L)=\lambda_a.
\]
We have $X_i=X_i^j\partial/\partial q^j$ with $(X_i^j)$ nonsingular.
Then the coordinate version of the full set of equations reads
\[
\Gamma\left(\fpd{L}{u^i}\right)-\fpd{L}{q^i}=\bar{X}_i^a\lambda_a,
\]
where $\bar{X}_i^a$ are the appropriate entries in the matrix inverse
to $(X_i^j)$.  In fact $v^a=\bar{X}_i^au^i$, which explains the
significance of these coefficients:\ the constraint equations are just
$\bar{X}_i^au^i=0$.  We emphasise that the multipliers $\lambda_a$ are
determined once one has chosen a full anholonomic frame adapted to the
constraint distribution and found $\Gamma$ from the fundamental
equations.  In fact the multipliers
are the components of $\varepsilon$ with respect to the coframe dual
to the (full) frame; the notional components with subscript $\alpha$
vanish because $\varepsilon$ takes its values in $\D^0$.

The fundamental equations
$\Gamma(\vlift{X_\alpha}(L))-\clift{X_\alpha}(L)=0$ involve
differentiations of $L$ in directions transverse to $\C$.  Actually
$\vlift{X_\alpha}(v^a)=0$, so the restriction of $\vlift{X_\alpha}$ to
$\C$ is tangent to it.  But as we pointed out earlier,
$\clift{X_\alpha}(v^a)=-R^a_{\alpha\beta}v^\beta$ on $\C$.  Now
$R^a_{\alpha\beta}$ is the component of $[X_\alpha,X_\beta]$ along
$X_a$, and in the case of genuinely nonholonomic constraints some of
these components will be nonzero; so we must expect that
$\clift{X_\alpha}$ will not be tangent to $\C$.  Using the $X_a$ we
can split $\clift{X_\alpha}$ into a component tangent to $\C$, which
we denote by $\clift{\bar{X}_\alpha}$ (the notation is not
intended to imply that this is a complete lift, nor that an inverse
matrix is involved), and a component transverse to it.  Since
$\vlift{X_a}(v^b)=\delta^b_a$, we see that
\[
\clift{\bar{X}_\alpha}=\clift{X_\alpha}+R^a_{\alpha\beta}v^\beta\vlift{X_a}
\]
is tangent to $\C$.  Since $v^\alpha \clift{X_\alpha} =
v^\alpha \clift{{\bar X}_\alpha}$, the vector field $\Gamma$ is in
fact of the form
$\Gamma=v^\alpha\clift{{\bar X}_\alpha}+\Gamma^\alpha\vlift{X_\alpha}$.
We may therefore write the fundamental equations as
\[
\Gamma(\vlift{X_\alpha}(L))-\clift{\bar{X}_\alpha}(L)
=-R^a_{\alpha\beta}v^\beta\vlift{X_a}(L);
\]
now every term on the left-hand side depends only on the value of $L$
on $\C$.  Let us denote the restriction of $L$ to $\C$ by
$L_c$; this is often called the constrained Lagrangian. Then the
version of the fundamental equations above may be written
\[
\Gamma(\vlift{X_\alpha}(L_c))-\clift{\bar{X}_\alpha}(L_c)
=-R^a_{\alpha\beta}v^\beta\vlift{X_a}(L).
\]
The equations appear in a somewhat similar form to this in
\cite{FandB} (Equation~(1.6)).

Finally, we can easily write the fundamental equations in Hamel form:\
they are
\[
\Gamma\left(\fpd{L}{v^\alpha}\right)-X_\alpha^i\fpd{L}{q^i}
+R^i_{\alpha\beta}v^\beta\fpd{L}{v^i}=0
\]
(compare with \cite{BMZ} Theorem~3.2).

\section{Vakonomic systems}

The term `vakonomic mechanics' was introduced by Arnold and Kozlov in
\cite{Arnold,Kozlov} and stands for `mechanics of Variational
Axiomatic Kind'.  The theory was proposed as an alternative to the
approach to nonholonomic dynamics discussed in the previous section. A fundamental reference in this context is the book chapter \cite{Gersch} by Vershik and Gershkovich. Since more and more evidence seems to suggest that the vakonomic
equations do not give the correct equations of motion, some authors
have refrained from using the term vakonomic mechanics and prefer to
call these systems `variational nonholonomic systems' (see e.g.\
\cite{Bloch}).  We shall follow the majority in using the word
`vakonomic', but avoid controversy by talking of `vakonomic systems',
`vakonomic problems' and so on, but never mentioning `vakonomic
mechanics'.

One idea behind the
formulation of vakonomic systems is to think of the
multipliers as additional variables.  Recall that the multipliers are
in fact components of a 1-form $\varepsilon$ (along a certain
projection) which takes its values in $\D^0\subset T^*Q$, the
annihilator bundle of the constraint distribution $\D$.  We therefore
take $\D^0$ as state space for the vakonomic system.
The Lagrangian of the vakonomic system will therefore
be a function $\hat{L}$ on $T\D^0$.  It is constructed as follows.
First, we are given a Lagrangian $L$ on $TQ$; but $T\D^0$ projects
onto $TQ$, so we can pull $L$ back to $T\D^0$ (and as before we
denote the pulled-back function by the same symbol $L$).  Secondly,
every point $\mu$ of $\D^0$ over $q\in Q$ is a covector at $q$, and
so defines a linear function on $T_qQ$; we can therefore define a
function $M$ on $T\D^0$ by $M(q,\mu,u,\nu)=\mu(u)$ (note that $M$ is
independent of the second fibre component $\nu$).  Then
\[
\hat{L}=L-M.
\]

In keeping with the philosophy of the rest of this paper, we now
express $\hat{L}$ in terms of an anholonomic frame.  We introduce an
anholonomic frame $\{X_\alpha,X_a\}$ adapted to the constraint
distribution as before.  By doing so we effectively identify $\D^0$
with $Q\times\R^{n-m}$, or in other words we fix fibre coordinates
$\mu_a$ on $\D^0$, which are the components of $\mu$ with respect to
the coframe dual to the chosen frame. This  identification implies a local character of some of our results. For a more intrinsic formulation, see e.g.\ \cite{CDMM,Gersch}. The Lagrangian $\hat{L}$
is given as a function on $T(Q\times\R^{n-m})$ by
\[
\hat{L}=L-\mu_av^a
\]
where the $v^a$ are quasi-velocities, as before.

We assume that $L$ is regular.  But even so $\hat{L}$ fails decisively
to be regular, so we cannot obtain its Euler-Lagrange equations simply
by applying Proposition~\ref{Prop1}.  We shall instead use the method of the
Cartan form, as outlined in Section~2.3, to obtain them.

We may augment our frame $\{X_i\}$ on $Q$ to a frame on $Q\times
\R^{n-m}$ simply by adjoining the coordinate fields
$\partial/\partial\mu_a$.  The vector fields $X_i$, now interpreted as
vector fields on $Q\times \R^{n-m}$, act in the same way as before on
the coordinates of $Q$ and have the property that $X_i(\mu_a)=0$.
Since $\hat{L}$ does not depend on the velocity variables
corresponding to the $\mu_a$ its Cartan 1-form is
\[
\vlift{X_i}(\hat{L})\theta^i=\vlift{X_i}(L)\theta^i-\mu_a\theta^a.
\]
For the same reason, $\hat{E}=E$ (the energy of $L$). The equation
$\hat{\Gamma}\hook\hat{\omega}=-d\hat{E}$, which is an equation for a
vector field or fields $\hat{\Gamma}$ on $T(Q\times\R^{n-m})$, becomes
\[
\hat{\Gamma}\hook(\omega-d(\mu_a\theta^a))=-dE,
\]
where $\omega$ is the Cartan 2-form of $L$ (properly speaking, pulled
back to $T(Q\times\R^{n-m})$).  By evaluating the equation on
$\vlift{X_i}$ we obtain
\[
\vlift{X_i}(\vlift{X_j}(L))\left(\theta^j(\hat{\Gamma})-v^j\right)=0,
\]
whence $\theta^j(\hat{\Gamma})=v^j$ since we assume that $L$ is
regular.  The coefficient of the term in $d\mu_a$ must vanish, whence
$\theta^a(\hat{\Gamma})=0=v^a$:\ that is to say, the equation
$\hat{\Gamma}\hook\hat{\omega}=-d\hat{E}$ has no solution except where
$v^a=0$, that is, except on $\C\times T\R^{n-m}$.  Finally, evaluating
the equation on $\clift{X_\alpha}$ and $\clift{X_a}$ successively gives
\begin{align*}
\hat{\Gamma}(\vlift{X_\alpha}(L))-\clift{X_\alpha}(L)
&=\mu_aR^a_{\alpha\beta}v^\beta,\\
\hat{\Gamma}(\vlift{X_a}(L))-\clift{X_a}(L)
&=\mu_bR^b_{a\alpha}v^\alpha+\hat{\Gamma}(\mu_a).
\end{align*}
We shall sometimes combine these two sets of equations into one by
writing
\[
\hat{\Gamma}(\vlift{X_i}(L))-\clift{X_i}(L)
=\mu_aR^a_{i\alpha}v^\alpha+\hat{\Gamma}(\mu_a)\delta^a_i.
\]
Now $\hat{\Gamma}$ is a vector field on $\C\times T\R^{n-m}\subset
T(Q\times\R^{n-m})$. It is natural to decompose it according to the
product structure, say $\hat{\Gamma}=\Gamma_{\C}+\Gamma_\mu$. The
equation displayed above then becomes
\[
\Gamma_{\C}(\vlift{X_i}(L))-\clift{X_i}(L)
=\mu_aR^a_{i\alpha}v^\alpha+\Gamma_\mu(\mu_a)\delta^a_i.
\]
We may construct Euler-Lagrange fields
$\hat{\Gamma}=\Gamma_{\C}+\Gamma_\mu$ for the vakonomic problem as
follows:\ $\Gamma_\mu$ is completely undetermined; but a choice of
$\Gamma_\mu$ having been made, $\Gamma_{\C}$ will in favourable
circumstances be determined by the equation above.

Since the velocity variables corresponding to the $\mu_a$ do not
appear in the equations, and in fact play no part at all in the
theory, we propose to ignore them; that is to say, we shall replace
$T(Q\times\R^{n-m})$ by $TQ\times\R^{n-m}$ and $\C\times T\R^{n-m}$ by
$\C\times\R^{n-m}$.  The frame $\{X_i,\partial/\partial\mu_a\}$ can
be lifted to a frame
$\{\clift{X}_i,\vlift{X}_i,\partial/\partial\mu_a\}$ (and in
principle  also
$\partial/\partial \nu_a$, but we ignore these in view of what was
said above).  In this frame, the vector fields $\clift{X}_i$, for
example, should be interpreted as vector fields on $TQ\times
\R^{n-m}$:\ again, they act in the same way as before on the
coordinate functions $x^i,v^i$ and have the property that
$\clift{X_i}(\mu_a)=0$.

Furthermore, the fact that the $\mu_a$ are variables is really rather
an embarrassment; we would prefer to think of them as functions on
$\C$, or in other words to take a section $\phi$ of the projection
$\C\times\R^{n-m}\to\C$ and restrict our attention to its image
$\im(\phi)$.

Any Euler-Lagrange field $\hat{\Gamma}$, restricted to $\im(\phi)$,
will take the form $\hat{\Gamma}=\Gamma_{\C}+\Gamma_\mu$ where since
$\Gamma_{\C}$ is of second-order type
\[
\Gamma_{\C}=v^\alpha\clift{X_\alpha}
+\Gamma^\alpha\vlift{X_\alpha}+\Gamma^a\vlift{X_a},\quad\mbox{and}\quad
\Gamma_\mu=A_a\vf{\mu_a};
\]
all coefficients are functions on $\C$.  Notice that though
$\Gamma_{\C}$ is here defined on $\C$, there is no reason
to believe that it is tangent to it; hence the inclusion of the term
in $\vlift{X_a}$.

 We may conclude:
\begin{prop}
If a vakonomic Euler-Lagrange field $\hat\Gamma$ is decomposed as above, the restriction of the vakonomic Euler-Lagrange equations to
$\im(\phi)$ may be written
\begin{align*}
\Gamma_\C(\vlift{X_\alpha}(L))-\clift{X_\alpha}(L)&=
\phi_aR^a_{\alpha\beta}v^\beta,\\
\Gamma_\C(\vlift{X_a}(L))-\clift{X_a}(L)&=
\phi_bR^b_{a\alpha\beta}v^\alpha+A_a.
\end{align*}
\end{prop}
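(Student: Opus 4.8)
The plan is to read off this proposition directly from the combined vakonomic Euler--Lagrange equation
\[
\Gamma_{\C}(\vlift{X_i}(L))-\clift{X_i}(L)=\mu_aR^a_{i\alpha}v^\alpha+\Gamma_\mu(\mu_a)\delta^a_i,
\]
which was already derived for $\hat\Gamma=\Gamma_\C+\Gamma_\mu$ on all of $\C\times\R^{n-m}$, simply by restricting it to $\im(\phi)$ and then separating the free index $i$ into the ranges $\{1,\dots,m\}$ and $\{m+1,\dots,n\}$. First I would pin down what the restriction does to each ingredient. Since $L$ is pulled back from $TQ$ it does not depend on the fibre variables $\mu_a$, so each of $\vlift{X_i}(L)$ and $\clift{X_i}(L)$ is a function on $\C$ only; and on $\im(\phi)$ the coefficients of $\Gamma_\C$ are functions on $\C$ by hypothesis. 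Hence the whole left-hand side already lives on $\C$ and is untouched by the passage to $\im(\phi)$. On the right-hand side the only effect of the restriction is to replace the coordinate functions $\mu_a$ by their values $\phi_a$ along the section.

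Next I would dispose of the term $\Gamma_\mu(\mu_a)$. Using $\clift{X_i}(\mu_a)=0=\vlift{X_i}(\mu_a)$ one sees that $\Gamma_\C$ annihilates every $\mu_a$, so that $\hat\Gamma(\mu_a)=\Gamma_\mu(\mu_a)=A_a$ with $\Gamma_\mu=A_a\vf{\mu_a}$. Substituting $\mu_a=\phi_a$ and $\Gamma_\mu(\mu_a)=A_a$ into the combined equation gives, on $\im(\phi)$,
\[
\Gamma_{\C}(\vlift{X_i}(L))-\clift{X_i}(L)=\phi_aR^a_{i\alpha}v^\alpha+A_a\delta^a_i.
\]
Setting $i=\alpha$ kills the Kronecker term and yields the first equation of the proposition; setting $i=a$ turns the Kronecker term into $A_a$ and yields the second. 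This is essentially all the computation required.

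The hard part is not any calculation but a point of interpretation that must be stated carefully, namely the meaning of $A_a$ on $\im(\phi)$. One must resist the temptation to set $A_a=\Gamma_\C(\phi_a)$: that identity would hold exactly when $\hat\Gamma$ is tangent to $\im(\phi)$, which is precisely the tangency/consistency condition examined later and is \emph{not} assumed here. In the present statement $\hat\Gamma$ is a field on the ambient space $\C\times\R^{n-m}$ whose $\mu$-component $A_a=\hat\Gamma(\mu_a)$ measures its failure to be tangent to the section, and it therefore remains free. The substance of the proof is thus to keep the algebraic dependence on $\mu_a$ (through $\mu_aR^a_{i\alpha}v^\alpha$) distinct from the differentiated dependence (through $\hat\Gamma(\mu_a)$), so that the restriction really is the plain substitution $\mu_a=\phi_a$ with $A_a$ left undetermined.
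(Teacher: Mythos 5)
Your proposal is correct and follows essentially the same route as the paper: the paper likewise obtains the proposition by restricting the combined equation $\hat{\Gamma}(\vlift{X_i}(L))-\clift{X_i}(L)=\mu_aR^a_{i\alpha}v^\alpha+\hat{\Gamma}(\mu_a)\delta^a_i$ to $\im(\phi)$, substituting $\mu_a=\phi_a$, using $\Gamma_\mu(\mu_a)=A_a$, and splitting the index $i$ into the ranges $\alpha$ and $a$, with $A_a$ left free exactly as you insist. (Incidentally, the $R^b_{a\alpha\beta}$ in the second displayed equation of the statement is a typo for $R^b_{a\alpha}$, as your derivation correctly produces.)
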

We shall always use the vakonomic equations in this form.

The first point to note is that provided $L$ is regular, so that the
Hessian of $L$ is non-singular, for given $\phi$ and $A_a$ these
equations determine $\Gamma_\C$ as a vector field on $\C\subset TQ$.
We next show that when further regularity conditions are satisfied we
can choose $A_a$ such that $\Gamma_\C$ is not just a vector field on
$\C$ but a vector field tangent to $\C$:\ that is, we can choose $A_a$
such that $\Gamma^a=0$.

For convenience we denote the components of the Hessian of $L$ with
respect to the frame $\{X_i\}$ by $g_{ij}$:
\[
\vlift{X_\alpha}(\vlift{X_\beta}(L))=g_{\alpha\beta},\quad
\vlift{X_\alpha}(\vlift{X_a}(L))=g_{\alpha a},\quad
\vlift{X_a}(\vlift{X_b}(L))=g_{ab}.
\]
The $g_{ij}$ are symmetric in their indices.  In discussing the
Lagrange-d'Alembert principle we imposed a regularity condition on
$L$, namely that the submatrix $(g_{\alpha\beta})$ of its Hessian must
be nonsingular on $\C$:\ we said that $L$ is then regular with respect
to $\D$.  Assuming that
$L$ is indeed regular with respect to $\D$, let $(g^{\alpha\beta})$ be
the matrix inverse to $(g_{\alpha\beta})$.  In the following argument
we must assume that $(g_{ab}-g^{\alpha\beta}g_{a\alpha}g_{b\beta})$ is
nonsingular on $\C$.  This matrix is in fact the restriction of the
Hessian $(g_{ij}(q,u))$ to the subspace of $T_qQ$ which is orthogonal to
$\D_q$ with respect to it, so when this matrix is nonsingular we say that
$L$ is regular with respect to $\D^\perp$.  When $(g_{ij})$ is
positive definite, $(g_{ab}-g^{\alpha\beta}g_{a\alpha}g_{b\beta})$
will automatically be nonsingular, as will $(g_{\alpha\beta})$ be;
indeed, both will be positive definite.  So when $(g_{ij})$ is
positive definite $L$ will automatically be regular with respect to
both $\D$ and $\D^\perp$, but for Hessians of other signatures we need
to make the assumptions explicit.

The equations for $\Gamma_{\C}$ can be written
\begin{align*}
&g_{\alpha\beta}\Gamma^\beta+g_{\alpha b}\Gamma^b=Y_\alpha,\\
&g_{a\beta}\Gamma^\beta+g_{ab}\Gamma^b=Y_a+A_a,
\end{align*}
where $Y_\alpha$ and $Y_a$ are known expressions in $L$, the $X_i$,
$\phi$, the $R^a_{ij}$, etc.  Using the assumption that $g_{\alpha\beta}$ is
nonsingular we can eliminate $\Gamma^\beta$ between these equations,
leaving
\[
(g_{ab}-g^{\alpha\beta}g_{a\alpha}g_{b\beta})\Gamma^b=
A_a+Y_a-g_{a\alpha}g^{\alpha\beta}Y_\beta.
\]
So when $(g_{ab}-g^{\alpha\beta}g_{a\alpha}g_{b\beta})$ is
nonsingular, by taking $A_a=g_{a\alpha}g^{\alpha\beta}Y_\beta-Y_a$ we
ensure that $\Gamma^a=0$ as required.

The resultant vector field $\Gamma_{\C}$, which is the projection onto
$\C$ of the restriction of $\hat{\Gamma}$ to $\im(\phi)$, is
determined by the equations
\[
\Gamma_{\C}(\vlift{X_\alpha}(L))-\clift{X_\alpha}(L)
=\phi_aR^a_{\alpha\beta}v^\beta.
\]
Of course, it will also satisfy
\[
\Gamma_{\C}(\vlift{X_a}(L))-\clift{X_a}(L)
=\phi_bR^b_{a\alpha}v^\alpha+A_a
\]
(with $A_a=g_{a\alpha}g^{\alpha\beta}Y_\beta-Y_a$); that is to say,
\[
\hat{\Gamma}=\Gamma_\C+\Gamma_\mu=v^\alpha\clift{X_\alpha}
+\Gamma^\alpha\vlift{X_\alpha}+A_a\vf{\mu_a}
\]
will satisfy the vakonomic equations on $\im(\phi)$.

Let us set $\Gamma_{\C}(\vlift{X_a}(L))-\clift{X_a}(L)=\Lambda_a$.
Then $A_a=\Lambda_a-\phi_bR^b_{a\alpha}v^\alpha$.

Now $\hat{\Gamma}$, so determined, is a vector field on $\im(\phi)$.
A natural question to ask is whether it is tangent to $\im(\phi)$,
that is, whether $\hat{\Gamma}(\mu_a-\phi_a)=0$ for $\mu_a=\phi_a$.
But $\hat{\Gamma}(\mu_a-\phi_a)=A_a-\Gamma_{\C}(\phi_a)$.  The
condition for tangency is therefore
\[
\Gamma_{\C}(\phi_a)+\phi_bR^b_{a\alpha}v^\alpha=\Lambda_a.
\]

We can summarise this discussion as follows.

\begin{prop}\label{GammaC}
Assume that $L$ is regular with respect to both $\D$ and $\D^\perp$.
Let $\phi:\mu_a=\phi_a$ be a section of $\C\times\R^{n-m}\to\C$.
There is a unique vector field $\hat{\Gamma}$ on the image of $\phi$
which is a solution of the vakonomic problem there, and
is such that its projection onto $\C$, $\Gamma_{\C}$, is tangent to
$\C$.  The vector field $\hat{\Gamma}$ so determined is tangent to the
image of $\phi$ if and only if
$\Gamma_{\C}(\phi_a)+\phi_bR^b_{a\alpha}v^\alpha=\Lambda_a$.
\end{prop}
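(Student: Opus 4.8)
The plan is to read both assertions off the vakonomic equations on $\im(\phi)$ established in the preceding proposition, organised around the Hessian block decomposition already set up. I write a candidate Euler--Lagrange field on $\im(\phi)$ in the decomposed form $\hat{\Gamma}=\Gamma_{\C}+\Gamma_\mu$ with
\[
\Gamma_{\C}=v^\alpha\clift{X_\alpha}+\Gamma^\alpha\vlift{X_\alpha}+\Gamma^a\vlift{X_a},\qquad
\Gamma_\mu=A_a\vf{\mu_a};
\]
the coefficient of $\clift{X_\alpha}$ is forced to be $v^\alpha$ because $\Gamma_{\C}$ is of second-order type and lives on $\C$, and tangency of $\Gamma_{\C}$ to $\C$ is precisely the requirement $\Gamma^a=0$.

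For existence and uniqueness I would substitute this ansatz into the two vakonomic equations and express the left-hand sides through the Hessian components $g_{ij}$, obtaining the block system
\begin{align*}
&g_{\alpha\beta}\Gamma^\beta+g_{\alpha b}\Gamma^b=Y_\alpha,\\
&g_{a\beta}\Gamma^\beta+g_{ab}\Gamma^b=Y_a+A_a,
\end{align*}
with $Y_\alpha,Y_a$ the known terms. Following the route of the discussion above, I eliminate $\Gamma^\beta$ to reach
\[
(g_{ab}-g^{\alpha\beta}g_{a\alpha}g_{b\beta})\Gamma^b=A_a+Y_a-g_{a\alpha}g^{\alpha\beta}Y_\beta,
\]
and since $L$ is regular with respect to $\D^\perp$ the left-hand matrix is invertible, so the choice $A_a=g_{a\alpha}g^{\alpha\beta}Y_\beta-Y_a$ forces $\Gamma^b=0$; regularity with respect to $\D$ then fixes $\Gamma^\alpha$ through the first block, giving a unique $\hat{\Gamma}$.

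For the tangency criterion I use that $\im(\phi)$ is the level set $\mu_a=\phi_a$, so $\hat{\Gamma}$ is tangent to it iff $\hat{\Gamma}(\mu_a-\phi_a)=0$ there. Since the lifted fields $\clift{X_i},\vlift{X_i}$ annihilate the $\mu_a$ while $\partial/\partial\mu_a$ annihilates functions $\phi_a$ on $\C$, this reduces to $A_a-\Gamma_{\C}(\phi_a)=0$. Writing $\Lambda_a=\Gamma_{\C}(\vlift{X_a}(L))-\clift{X_a}(L)$, the second vakonomic equation reads $A_a=\Lambda_a-\phi_bR^b_{a\alpha}v^\alpha$, and substituting turns the tangency condition into the stated $\Gamma_{\C}(\phi_a)+\phi_bR^b_{a\alpha}v^\alpha=\Lambda_a$.

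I expect the subtlety here to be bookkeeping rather than depth: correctly identifying the Schur complement $g_{ab}-g^{\alpha\beta}g_{a\alpha}g_{b\beta}$ with the $\D^\perp$-Hessian, and keeping straight which lifted fields annihilate $\mu_a$ and which annihilate $\phi_a$. The one point genuinely worth care is the logical role of the two regularity hypotheses: imposing $\Gamma^a=0$ at the outset needs only regularity with respect to $\D$, so the $\D^\perp$-assumption is what guarantees that this unique tangent solution coincides with the field one obtains by first fixing $\Gamma_\mu$ and then inverting the full Hessian system, and hence that it is genuinely the unique vakonomic solution with $\Gamma_{\C}$ tangent to $\C$.
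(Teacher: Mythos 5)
Your argument is essentially identical to the paper's own derivation: the same block system $g_{\alpha\beta}\Gamma^\beta+g_{\alpha b}\Gamma^b=Y_\alpha$, $g_{a\beta}\Gamma^\beta+g_{ab}\Gamma^b=Y_a+A_a$, the same elimination to the Schur complement $(g_{ab}-g^{\alpha\beta}g_{a\alpha}g_{b\beta})$ forcing $A_a=g_{a\alpha}g^{\alpha\beta}Y_\beta-Y_a$, and the same computation $\hat{\Gamma}(\mu_a-\phi_a)=A_a-\Gamma_{\C}(\phi_a)$ for the tangency criterion. The proof is correct and matches the paper's route step for step.
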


One might very well query the importance of the requirement of
tangency to $\im(\phi)$:\ but consider the following.

Suppose that the $\phi_a$ can be continued off $\C$, that is, suppose
that there are functions $\Phi_a$ defined in a neighbourhood of $\C$
in $TQ$ such that $\phi_a=\Phi_a|_{\C}$.  This can always be done
locally:\ since $\C$ is given by $v^a=0$ it is enough to make $\Phi_a$
independent of the $v^a$.  Suppose that the $A_a$ have been chosen so
that $\Gamma_{\C}$ is tangent to $\C$; then
$\Gamma_{\C}(\Phi_a)=\Gamma_{\C}(\phi_a)$.  Consider the Lagrangian
$\tilde{L}=L-\Phi_av^a$, which is, note, a function on some
neighbourhood of $\C$ in $TQ$.  We have
\begin{align*}
\vlift{X_i}(\tilde{L})&=\vlift{X_i}(L)-\vlift{X_i}(\Phi_a)v^a-\Phi_a\delta^a_i,\\
\clift{X_i}(\tilde{L})&=\clift{X_i}(L)-\clift{X_i}(\Phi_a)v^a+\Phi_aR^a_{ij}v^j,
\end{align*}
whence on $\C$
\[
\Gamma_{\C}(\vlift{X_\alpha}(\tilde{L}))-\clift{X_\alpha}(\tilde{L})=
\Gamma_{\C}(\vlift{X_\alpha}(L))-\clift{X_\alpha}(L)
-\phi_aR^a_{\alpha\beta}v^\beta=0,
\]
while
\begin{align*}
\Gamma_{\C}(\vlift{X_a}(\tilde{L}))-\clift{X_a}(\tilde{L})&=
\Gamma_{\C}(\vlift{X_a}(L))-\clift{X_a}(L)
-\Gamma_{\C}(\phi_a)-\phi_bR^b_{a\alpha}v^\alpha\\
&=\Lambda_a-\Gamma_{\C}(\phi_a)-\phi_bR^b_{a\alpha}v^\alpha.
\end{align*}
Thus the tangency condition is the necessary and sufficient condition
for $\Gamma_{\C}$ to satisfy the Euler-Lagrange equations of
$\tilde{L}$ on $\C$ (and this for any extension of the $\phi_a$ off
$\C$).  That is to say:
\begin{thm}\label{Ltilde}
If there is a section $\phi$ such that
$\hat{\Gamma}=\Gamma_{\C}+\Gamma_\mu$ is tangent to the image of
$\phi$, where $\Gamma_\mu$ is chosen such that $\Gamma_{\C}$ is
tangent to $\C$, then $\Gamma_{\C}$ is the restriction to $\C$ of an
Euler-Lagrange field of $\tilde{L}$ for any extension of the $\phi_a$
off $\C$.  Conversely, let $\tilde{\Gamma}$ be an Euler-Lagrange field
of $\tilde{L}=L-\Phi_av^a$; suppose that $\tilde{\Gamma}$ is tangent
to $\C$.  Let $\phi_a=\Phi_a|_{\C}$ and consider the section
$\phi:\mu_a=\phi_a$.  Then $\phi_*\tilde{\Gamma}^0$, where
$\tilde{\Gamma}^0$ is the restriction of $\tilde{\Gamma}$ to $\C$, is
a solution of the vakonomic problem on the image of $\phi$.
\end{thm}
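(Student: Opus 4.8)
The whole content is the computation displayed immediately before the statement, read in both directions: it shows that, for \emph{any} extension $\Phi_a$ of the $\phi_a$ off $\C$, the second-order field $\Gamma_\C$ satisfies the frame-form Euler--Lagrange equations $\Gamma_\C(\vlift{X_i}(\tilde L))-\clift{X_i}(\tilde L)=0$ along $\C$ if and only if the tangency condition $\Gamma_\C(\phi_a)+\phi_bR^b_{a\alpha}v^\alpha=\Lambda_a$ of Proposition~\ref{GammaC} holds. The two halves of the theorem are just the two readings of this equivalence, dressed up with the appropriate regularity and restriction bookkeeping.

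For the direct implication I would fix an extension $\Phi_a$, form $\tilde L=L-\Phi_av^a$, and feed the formulas already recorded for $\vlift{X_i}(\tilde L)$ and $\clift{X_i}(\tilde L)$ into the Euler--Lagrange operator of $\tilde L$. Using that $\Gamma_\C$ is of second-order type and tangent to $\C$ (so $\Gamma_\C(v^a)=0$ and $\Gamma_\C(\Phi_a)=\Gamma_\C(\phi_a)$ on $\C$), the $\alpha$-components collapse on $\C$ to $\Gamma_\C(\vlift{X_\alpha}(L))-\clift{X_\alpha}(L)-\phi_aR^a_{\alpha\beta}v^\beta$, which vanishes because $\Gamma_\C$ obeys the first defining equation of Proposition~\ref{GammaC}; the $a$-components collapse to $\Lambda_a-\Gamma_\C(\phi_a)-\phi_bR^b_{a\alpha}v^\alpha$, which vanishes precisely by the tangency hypothesis. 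Hence $\Gamma_\C$ solves all $n$ frame-form Euler--Lagrange equations of $\tilde L$ at every point of $\C$. To turn this into the statement that $\Gamma_\C$ \emph{is the restriction of} an Euler--Lagrange field, I would appeal to Proposition~\ref{Prop1}: since $\tilde L$ inherits regularity from $L$, it has a unique Euler--Lagrange field $\tilde\Gamma$, whose $\vlift{X_i}$-coefficients are fixed pointwise by the nonsingular Hessian; as $\Gamma_\C$ (with vanishing $\vlift{X_a}$-coefficient) solves the same equations on $\C$, uniqueness forces $\tilde\Gamma|_\C=\Gamma_\C$.

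For the converse I would run the identical reductions backwards. Given an Euler--Lagrange field $\tilde\Gamma$ of $\tilde L$ tangent to $\C$, its restriction $\tilde\Gamma^0$ is a second-order field on $\C$, tangent to $\C$; with $\phi_a=\Phi_a|_\C$ the vanishing of the $\alpha$-components now \emph{delivers} the first vakonomic equation $\tilde\Gamma^0(\vlift{X_\alpha}(L))-\clift{X_\alpha}(L)=\phi_aR^a_{\alpha\beta}v^\beta$, while the vanishing of the $a$-components delivers $\Lambda_a=\tilde\Gamma^0(\phi_a)+\phi_bR^b_{a\alpha}v^\alpha$. Putting $\Gamma_\C=\tilde\Gamma^0$ and $A_a=\tilde\Gamma^0(\phi_a)$, these are exactly the two equations of Proposition~3 for the field $\phi_*\tilde\Gamma^0=\Gamma_\C+A_a\,\partial/\partial\mu_a$ --- the $\mu_a$-component of the push-forward being $\tilde\Gamma^0(\phi_a)$ because $\mu_a=\phi_a$ holds identically on $\im(\phi)$. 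Thus $\phi_*\tilde\Gamma^0$ is a solution of the vakonomic problem on $\im(\phi)$, as claimed.

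I expect the only real obstacle to be the upgrade, in the direct implication, from ``$\Gamma_\C$ satisfies the Euler--Lagrange equations of $\tilde L$ pointwise on $\C$'' to ``$\Gamma_\C$ is the restriction of a genuine Euler--Lagrange field of $\tilde L$''. This needs $\tilde L$ to be regular near $\C$ so that Proposition~\ref{Prop1} applies and, in particular, forces the $\clift{X_a}$-part of $\tilde\Gamma$ to vanish on $\C$ to match the tangency of $\Gamma_\C$. On $\C$ the Hessian blocks $g_{\alpha\beta}$ and $g_{ab}$ are unchanged and only the mixed block picks up the term $-\vlift{X_\alpha}(\Phi_b)$, so nonsingularity is automatic when $(g_{ij})$ is positive definite and otherwise holds for suitably chosen extensions; everything else is routine Leibniz- and bracket-rule bookkeeping already carried out before the statement.
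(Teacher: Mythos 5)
Your proposal is correct and follows essentially the same route as the paper: the proof there is precisely the displayed computation of $\Gamma_\C(\vlift{X_i}(\tilde{L}))-\clift{X_i}(\tilde{L})$ on $\C$ preceding the statement, read in both directions. Your additional remarks on the regularity of $\tilde{L}$ near $\C$ address a point the paper leaves implicit (it verifies this only later, for Chaplygin systems), but they do not alter the argument.
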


\section{The consistency problem}

We now turn to the vexed question of whether the nonholonomic and
vakonomic problems can ever be in any sense consistent.  As we have
shown, the vakonomic problem leads to a whole class of dynamical
vector fields $\hat{\Gamma}$ defined on $\C\times\R^{n-m}$ but not
necessarily tangent to it, the Lagrange-d'Alembert principle to a
single vector field $\Gamma$ defined on $\C$ and tangent to it.  The
question of how one decides whether there can be any coincidence
between such vector fields is therefore somewhat puzzling.  We propose
the following answer, under the assumption that $L$ is regular with
respect to both $\D$ and $\D^\perp$:\ we say that the vakonomic
and the nonholonomic problems are {\em weakly consistent\/} if there
is a section $\phi$ of the projection $\C\times \R^{n-m}\to\C$ such
that the corresponding vector field $\Gamma_\C$ defined in
Proposition~\ref{GammaC} is the nonholonomic dynamical vector field
$\Gamma$.  (We shall propose a stronger criterion for consistency
shortly.)

\begin{thm}
The following three statements are equivalent.
\begin{itemize}
\item[(i)] The variational nonholonomic
problem and the nonholonomic dynamics are weakly consistent.
\item[(ii)] There is a section $\phi$ of the projection $\C\times
\R^{n-m}\to\C$ such that $\phi_aR^a_{\alpha\beta}v^\beta=0$.
\item[(iii)] There is a section $\phi$ of the projection $\C\times
\R^{n-m}\to\C$ such that the vector field $\bar{\Gamma}$ on
$\im(\phi)$ given by
\[
\bar{\Gamma}=\Gamma+(\lambda_a-\phi_bR^b_{a\alpha}v^\alpha)\vf{\mu_a},
\]
where the $\lambda_a$ are the multipliers for $\Gamma$, satisfies the
vakonomic equations on $\im(\phi)$.
\end{itemize}
\label{Thm2}
\end{thm}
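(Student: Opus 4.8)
The plan is to fix an arbitrary section $\phi:\mu_a=\phi_a$ and to show that, for that single $\phi$, each of the three conditions is equivalent to the pointwise identity $\phi_aR^a_{\alpha\beta}v^\beta=0$ on $\C$. The equivalence of the three existential statements then follows at once, the same $\phi$ serving as witness throughout.

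First I would establish the equivalence of (i) and (ii). Both the nonholonomic field $\Gamma$ of Proposition~2 and the vakonomic field $\Gamma_\C$ of Proposition~\ref{GammaC} are of second-order type and tangent to $\C$, hence both take the form $v^\alpha\clift{X_\alpha}+\Theta^\alpha\vlift{X_\alpha}$ and are pinned down entirely by their coefficients $\Theta^\alpha$. By regularity of $L$ with respect to $\D$, that is, nonsingularity of $(g_{\alpha\beta})$, the assignment $\Theta^\alpha\mapsto\Theta(\vlift{X_\alpha}(L))-\clift{X_\alpha}(L)=g_{\alpha\beta}\Theta^\beta+(\text{terms free of }\Theta)$ is an affine bijection; so two such fields coincide if and only if the right-hand sides of their defining equations agree. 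Since $\Gamma$ solves $\Gamma(\vlift{X_\alpha}(L))-\clift{X_\alpha}(L)=0$ whereas $\Gamma_\C$ solves $\Gamma_\C(\vlift{X_\alpha}(L))-\clift{X_\alpha}(L)=\phi_aR^a_{\alpha\beta}v^\beta$, we obtain, for the fixed $\phi$, that $\Gamma_\C=\Gamma$ precisely when $\phi_aR^a_{\alpha\beta}v^\beta=0$, which is the identity appearing in statement (ii).

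Next I would establish the equivalence of (ii) and (iii). Here $\bar{\Gamma}$ is constructed to have base part $\Gamma$ and $\mu$-part $A_a=\lambda_a-\phi_bR^b_{a\alpha}v^\alpha$, and one simply substitutes it into the two vakonomic equations. In the second equation the left-hand side is $\Gamma(\vlift{X_a}(L))-\clift{X_a}(L)=\lambda_a$ by the definition of the multipliers, while the right-hand side is $\phi_bR^b_{a\alpha}v^\alpha+A_a=\lambda_a$; thus this equation holds identically for every $\phi$, by the very choice of $A_a$. In the first equation the left-hand side is $\Gamma(\vlift{X_\alpha}(L))-\clift{X_\alpha}(L)=0$ and the right-hand side is $\phi_aR^a_{\alpha\beta}v^\beta$, so this equation holds if and only if $\phi_aR^a_{\alpha\beta}v^\beta=0$. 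Hence $\bar{\Gamma}$ satisfies the vakonomic equations on $\im(\phi)$ exactly when (ii) holds.

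Combining the two steps, each of (i) and (iii) is equivalent, for the fixed $\phi$, to the identity in (ii); quantifying over $\phi$ yields the asserted three-way equivalence. I do not anticipate a serious obstacle: the only step requiring genuine care is the bijection argument of the second paragraph, which relies on regularity with respect to $\D$ together with the fact that $\Gamma$ and $\Gamma_\C$ are both honestly tangent to $\C$, so that neither carries a $\vlift{X_a}$ term that could corrupt the comparison of the $\Theta^\alpha$. The remaining step is a bookkeeping verification, rigged precisely so that the nonholonomic multipliers $\lambda_a$ cancel the constructed $A_a$.
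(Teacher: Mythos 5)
Your proposal is correct and follows essentially the same route as the paper: both directions reduce each condition, for a fixed section $\phi$, to the identity $\phi_aR^a_{\alpha\beta}v^\beta=0$, using that $\Gamma$ and $\Gamma_\C$ are uniquely determined (via regularity with respect to $\D$) by their respective defining equations, and then verifying by direct substitution that $\bar{\Gamma}$ satisfies the $X_a$-component of the vakonomic equations automatically while the $X_\alpha$-component holds precisely when the identity does. Your explicit affine-bijection argument merely spells out what the paper leaves implicit in the phrase ``$\Gamma_\C$ is determined by the equation.''
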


\begin{proof}
As we pointed out earlier, $\Gamma_\C$ is determined by the equation
\[
\Gamma_{\C}(\vlift{X_\alpha}(L))-\clift{X_\alpha}(L)
=\phi_aR^a_{\alpha\beta}v^\beta
\]
on $\C$. Thus $\Gamma_\C$ will coincide with $\Gamma$ if and only if
$\phi_aR^a_{\alpha\beta}v^\beta=0$.

On the other hand $\bar{\Gamma}$ satisfies
\begin{align*}
\bar{\Gamma}(\vlift{X_\alpha}(L))-\clift{X_\alpha}(L)&=0,\\
\bar{\Gamma}(\vlift{X_a}(L))-\clift{X_a}(L)
&=\phi_bR^b_{a\alpha}v^\alpha+\bar{\Gamma}(\mu_a),
\end{align*}
on $\im(\phi)$. Comparing with the vakonomic equations on $\im(\phi)$,
namely
\begin{align*}
\hat{\Gamma}(\vlift{X_\alpha}(L))-\clift{X_\alpha}(L)&=
\phi_aR^a_{\alpha\beta}v^\beta,\\
\hat{\Gamma}(\vlift{X_a}(L))-\clift{X_a}(L)&=
\phi_bR^b_{a\alpha}v^\alpha+\hat{\Gamma}(\mu_a),
\end{align*}
we see that $\bar{\Gamma}$ satisfies the vakonomic equations if and
only if $\phi_aR^a_{\alpha\beta}v^\beta=0$.
\end{proof}

Theorem~\ref{Thm2} shows that there is a process leading from the
nonholonomic dynamics $\Gamma$ to a vakonomic field $\bar{\Gamma}$ and
that therefore weak consistency works in both directions.  In fact the
construction leading to $\bar\Gamma$ can be applied to any vector
field of second-order type tangent to $\C$, with the obvious choice of
`multipliers' $\lambda_a$; if it is applied to $\Gamma_\C$ we get back
$\hat{\Gamma}$.  Conversely, if we apply Proposition~\ref{GammaC} to
$\bar{\Gamma}$ we obtain $\Gamma$.

The condition in item $(ii)$ of the theorem is, of course, not new.
It can also be found in other texts such as e.g.\ \cite{CDMM,FandB}
and it dates back to at least the paper \cite{Rumiantsev} of
Rumiantsev.  It can, however, always be
satisfied:\ one merely has to take $\phi_a=0$, that is, choose the
zero section.  The corresponding vakonomic field $\bar{\Gamma}$ then
has a particularly attractive form:
\[
\bar{\Gamma}=\Gamma+\lambda_a\vf{\mu_a}.
\]
The fact that weak consistency always holds explains why we regard it
as weak, and why we propose the following stronger version.  We
say that the nonholonomic and vakonomic problems are {\em strongly
consistent\/} if the problems are weakly consistent on $\im(\phi)$ for
some section $\phi$, and if in addition the vakonomic field
$\bar{\Gamma}$ is tangent to $\im(\phi)$, or in other words
$\phi_*\Gamma$ is vakonomic.

\begin{cor} \label{cor0}
The necessary and sufficient condition for the nonholonomic and
vakonomic problems to be strongly consistent is the existence of a
section $\phi$ of the projection $\C\times
\R^{n-m}\to\C$ such that $\phi_aR^a_{\alpha\beta}v^\beta=0$
and $\Gamma(\phi_a)+\phi_bR^b_{a\alpha}v^\alpha=\lambda_a$.
\end{cor}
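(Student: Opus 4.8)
The plan is to read the corollary off directly from the definition of strong consistency, which is stated as the conjunction of two requirements on a single section $\phi$: weak consistency on $\im(\phi)$, together with tangency of the associated vakonomic field $\bar{\Gamma}$ to $\im(\phi)$. I would show that each of these requirements is captured by exactly one of the two displayed equations, so that the existence of a $\phi$ satisfying both is the asserted necessary and sufficient condition.

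For the weak-consistency requirement I would invoke the proof of Theorem~\ref{Thm2}: for a given section $\phi$, the vakonomic projection $\Gamma_\C$ of Proposition~\ref{GammaC} coincides with the nonholonomic field $\Gamma$ precisely when $\phi_aR^a_{\alpha\beta}v^\beta=0$, which is the first displayed condition. The crucial consequence to record is that under this condition $\Gamma_\C=\Gamma$, so that the multiplier $\Lambda_a=\Gamma_\C(\vlift{X_a}(L))-\clift{X_a}(L)$ reduces to $\lambda_a=\Gamma(\vlift{X_a}(L))-\clift{X_a}(L)$.

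For the tangency requirement I would use the explicit form $\bar{\Gamma}=\Gamma+(\lambda_a-\phi_bR^b_{a\alpha}v^\alpha)\partial/\partial\mu_a$ furnished by Theorem~\ref{Thm2}(iii). Since $\Gamma(\mu_a)=0$ while each $\phi_a$ is a function on $\C$ (so the $\partial/\partial\mu_a$ terms annihilate it), a one-line computation yields
\[
\bar{\Gamma}(\mu_a-\phi_a)=\lambda_a-\Gamma(\phi_a)-\phi_bR^b_{a\alpha}v^\alpha.
\]
Tangency of $\bar{\Gamma}$ to $\im(\phi)$ requires this to vanish where $\mu_a=\phi_a$, which is exactly the second displayed condition $\Gamma(\phi_a)+\phi_bR^b_{a\alpha}v^\alpha=\lambda_a$. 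This is simply the tangency calculation already made just before Proposition~\ref{GammaC}, now specialised to the weakly consistent case $\Gamma_\C=\Gamma$, with $\Lambda_a$ there replaced by $\lambda_a$.

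Assembling the two steps, a section $\phi$ renders the problems strongly consistent if and only if it satisfies both displayed equations, which is the statement of the corollary. I anticipate no real obstacle: the result is an assembly of the definition of strong consistency with Theorem~\ref{Thm2} and the earlier tangency computation. The single point meriting care is the replacement of $\Lambda_a$ by $\lambda_a$ in the tangency condition, which is legitimate precisely because weak consistency forces $\Gamma_\C=\Gamma$.
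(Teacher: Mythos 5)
Your proposal is correct and follows exactly the route the paper intends: the corollary is stated without a separate proof precisely because it is the assembly of the definition of strong consistency with Theorem~2 (for the first condition) and the tangency computation preceding Proposition~3 with $\Gamma_{\C}=\Gamma$ and $\Lambda_a=\lambda_a$ (for the second). Your explicit one-line computation of $\bar{\Gamma}(\mu_a-\phi_a)$ and your remark on why $\Lambda_a$ may be replaced by $\lambda_a$ are exactly the points the paper leaves implicit.
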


\begin{cor}
If the nonholonomic and vakonomic problems are strongly consistent,
$\Gamma$ is the restriction to $\C$ of an Euler-Lagrange field of a
Lagrangian $\tilde{L}$.\label{cor1}
\end{cor}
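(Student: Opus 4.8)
The plan is to observe that the definition of strong consistency has been arranged precisely so that the hypotheses of Theorem~\ref{Ltilde} are met; once this is recognized, the conclusion is immediate and the Lagrangian $\tilde{L}$ is the very one produced by that theorem.

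First I would unpack strong consistency into its two constituent parts. By definition it provides a section $\phi$ for which (a) the problems are weakly consistent and (b) the associated vakonomic field $\bar{\Gamma}$ is tangent to $\im(\phi)$. For part (a), Theorem~\ref{Thm2} tells us that weak consistency is exactly the condition $\phi_aR^a_{\alpha\beta}v^\beta=0$, and that this is the statement that the vakonomic field $\Gamma_\C$ of Proposition~\ref{GammaC} coincides with the nonholonomic dynamics $\Gamma$. So from the outset I may work with the identification $\Gamma_\C=\Gamma$.

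Next I would match part (b) against the hypothesis of Theorem~\ref{Ltilde}. The construction of $\bar{\Gamma}$ in Theorem~\ref{Thm2}, applied to the second-order field $\Gamma_\C=\Gamma$ with its obvious multipliers, returns exactly the vakonomic field $\hat{\Gamma}=\Gamma_\C+\Gamma_\mu$, where $\Gamma_\mu$ is chosen as in Proposition~\ref{GammaC} so that $\Gamma_\C$ is tangent to $\C$. Hence the requirement in (b) that $\bar{\Gamma}$ be tangent to $\im(\phi)$ is literally the requirement that $\hat{\Gamma}$ be tangent to $\im(\phi)$. Equivalently, one can read this straight off Corollary~\ref{cor0}: its second condition $\Gamma(\phi_a)+\phi_bR^b_{a\alpha}v^\alpha=\lambda_a$ coincides with the tangency condition $\Gamma_\C(\phi_a)+\phi_bR^b_{a\alpha}v^\alpha=\Lambda_a$ of Proposition~\ref{GammaC}, once one notes that $\Gamma_\C=\Gamma$ forces $\Lambda_a=\Gamma(\vlift{X_a}(L))-\clift{X_a}(L)=\lambda_a$, the nonholonomic multipliers.

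Both hypotheses of Theorem~\ref{Ltilde} are thereby in hand: $\Gamma_\mu$ has been chosen so that $\Gamma_\C$ is tangent to $\C$, and $\hat{\Gamma}=\Gamma_\C+\Gamma_\mu$ is tangent to $\im(\phi)$. Applying that theorem gives that $\Gamma_\C$ is the restriction to $\C$ of an Euler-Lagrange field of $\tilde{L}=L-\Phi_av^a$, for any extension $\Phi_a$ of the $\phi_a$ off $\C$; since $\Gamma_\C=\Gamma$, this is exactly the assertion. The argument is in essence a bookkeeping exercise assembling earlier results, so I do not anticipate a genuine obstacle; the single point demanding care is the identification $\Lambda_a=\lambda_a$, which is valid only because weak consistency has already forced $\Gamma_\C=\Gamma$, and it is this that lets the tangency clause of the definition be cashed in as the tangency hypothesis of Theorem~\ref{Ltilde}.
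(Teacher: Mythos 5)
Your proposal is correct and follows exactly the route the paper takes: the paper's entire proof is ``Apply Theorem~\ref{Ltilde}'', and your argument simply spells out why strong consistency delivers the hypotheses of that theorem (weak consistency forcing $\Gamma_\C=\Gamma$ and hence $\Lambda_a=\lambda_a$, so that tangency of $\bar{\Gamma}$ to $\im(\phi)$ is the tangency hypothesis of the theorem). The extra bookkeeping you supply is accurate and consistent with the remarks the paper makes between Theorem~\ref{Thm2} and the corollary.
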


\begin{proof}
Apply Theorem~\ref{Ltilde}.
\end{proof}

It seems to us that Corollary 2 provides what is probably the most
interesting consequence of our analysis of the consistency of
nonholonomic and vakonomic problems, namely that strong consistency is
a sufficient condition for the nonholonomic dynamics to be the
restriction of an Euler-Lagrange field.

The introduction of $\tilde{L}$ is the process called by Fernandez and
Bloch \cite{FandB}, in the context of Abelian Chaplygin systems,
`the elimination of the multipliers'.  To obtain their result these
authors assume what they call `conditional variationality', which in
that context corresponds to our strong consistency.
Corollary~\ref{cor1} above is a substantial generalization of their
result.  It is worth emphasising that Theorem~\ref{Ltilde} itself,
which is the major ingredient of the corollary, is quite independent
of the question of consistency and is concerned just with the
vakonomic problem.

Strong consistency has the following consequences so far as individual
motions are concerned.  Let $\gamma$ be any individual integral curve
of $\Gamma$.  Let $\psi^0_a$ be such that
\[
\psi^0_a(R^a_{\alpha\beta}v^\beta)(\gamma(0))=0.
\]
The equations
\[
\dot{\psi}_a(t)+\psi_b(t)R^b_{a\alpha}v^\alpha=\lambda_a,\quad
\psi_a(0)=\psi^0_a,
\]
where all of the variables other than $\psi_a$ are evaluated at
$\gamma(t)$, have a unique solution.  Then $(\gamma(t),\psi(t))$ is a
solution curve of the vakonomic problem if and only if
$\psi_a(t)R^a_{\alpha\beta}v^\beta=0$ for all $t$.  If strong
consistency holds, with respect to a section $\mu_a=\phi_a$, then
$\psi_a(t)=\phi_a(\gamma(t))$ satisfies this condition for every
$\gamma$.  On the other hand, it would be possible to formulate a
partial version of consistency, in which this condition holds along
some, but not necessarily all, integral curves of $\Gamma$. A
definition along these lines is to be found in \cite{FandB}. It seems likely that the conditions for partial consistency case are related to the so-called second-order constraints in the algorithm proposed in \cite{CDMM}.

\section{Chaplygin systems}

We now discuss the results of the previous section as they apply to
(non-Abelian) Chaplygin systems, which we define immediately below.

Assume that a Lie group $G$ acts in a free and proper way on the
configuration manifold $Q$.  Then $Q\to Q/G$ is a principal fibre
bundle.  A Chaplygin system \cite{Koiller}, sometimes referred to
as a generalized Chaplygin system \cite{Frans} or as a nonholomic
system of `purely kinematic' or `principal' type \cite{Bloch,BKMM}, is
a nonholonomic system where the Lagrangian is invariant under the
induced action of $G$ on $TQ$, and moreover the constraint distribution
$\D$ is the horizontal distribution of a principal connection on the
bundle. For more details about principal bundles and principal
connections, see e.g.\ \cite{KN}.

The most natural choice for a frame for a Chaplygin system is one
where the vector fields $X_a=\tilde{E}_a$ are the fundamental vector
fields of the action of the group $G$ and the vector fields $X_\alpha$
span $\D$ and in addition are invariant under $G$.  Thus
$[\tilde{E}_a,\tilde{E}_b]=-C_{ab}^c\tilde{E}_c$ where the
coefficients are the structure constants of the Lie algebra $\g$ of
$G$, and $[\tilde{E}_a,X_\alpha]=0$.  We have ${R}_{a\alpha}^i=0$ and
${R}_{ab}^i=-\delta^i_cC^c_{ab}$.  The formula for the action of a
complete lift on a quasi-velocity gives
\[
\clift{\tilde{E}_a}(v^b)=-{R}_{ai}^bv^i=C_{ac}^bv^c.
\]
In particular, $\clift{\tilde{E}_a}$ is tangent to $\C$:\ the group
action leaves $\C$ invariant.

The nonholonomic dynamical vector field $\Gamma$ obtained from the
Lagrange-d'Alembert principle is determined by the fundamental
equations
\[
\Gamma(\vlift{X_\alpha}(L))-\clift{X_\alpha}(L)=0.
\]
These may be written in the alternative form
\[
\Gamma(\vlift{X_\alpha}(L_c))-\clift{\bar{X}_\alpha}(L_c)
=-{R}^a_{\alpha\beta}v^\beta\vlift{\tilde{E}_a}(L)=
-{R}^a_{\alpha\beta}v^\beta p_a
\]
where $p_a=\vlift{\tilde{E}_a}(L)$ is just the $a$th component of
momentum for the free Lagrangian $L$ corresponding to the action of
$G$ as a symmetry group of $L$.  The component
$\clift{\bar{X}_\alpha}$ of $\clift{X_\alpha}$ tangent to $\C$ is
given in this case by
\[
\clift{\bar{X}_\alpha}=\clift{X_\alpha}+
{R}^a_{\alpha\beta}v^\beta\vlift{\tilde{E}_a}.
\]
Note that
$\clift{\bar{X}_\alpha}(L)=\clift{X_\alpha}(L)+
{R}^a_{\alpha\beta}v^\beta p_a$.

Since $\clift{\tilde{E}_a}(L)=0$, the multiplier equation is
\[
\Gamma(p_a)=\lambda_a;
\]
the momentum is not conserved by the nonholonomic dynamics, therefore,
but it is related to the multipliers in a simple fashion.
Due to the invariance of $L$, we have
$\clift{\tilde{E}_a}(p_b) =
[\clift{\tilde{E}_a},\vlift{\tilde{E}_b}](L) =
-C_{ab}^c\vlift{\tilde{E}_c}(L)$, or
\[
\clift{\tilde{E}_a}(p_b)+C_{ab}^cp_c=0,
\]
which says in fact that the momentum, regarded as a $\g^*$-valued
function on $TQ$, transforms according to the coadjoint action of $G$.
By acting with $\clift{\tilde{E}_a}$ on the fundamental equations one
finds that $[\clift{\tilde{E}_a},\Gamma](\vlift{X_\alpha}(L))=0$.  But
as we pointed out in Section~2, $[\clift{\tilde{E}_a},\Gamma]$ (the
bracket of a complete lift and a vector field of second-order type) is
vertical.  It is also tangent to $\C$.  Provided that $L$ is regular
with respect to $\D$ it follows that $[\clift{\tilde{E}_a},\Gamma]=0$,
which is to say that the dynamical vector field $\Gamma$ is invariant
under the action of $G$ on $\C$.
It follows immediately that
\[
\clift{\tilde{E}_a}(\lambda_b)+C_{ab}^c\lambda_c=0,
\]
which says that the multipliers should also be seen as
components of a $\g^*$-valued function transforming under the
coadjoint action.  In fact for $q\in Q$ the map $\g\to T_qQ$ by
$\xi\mapsto\tilde{\xi}_q$ identifies $\g$ with a subspace of $T_qQ$
complementary to $\D_q$ (namely the subspace tangent to the fibre).
Under this identification $\D^0_q$, the annihilator subsapce of $\D_q$
in $T^*_qQ$, can be identified with $\g^*$.

Consider now the vakonomic problem.  The Euler-Lagrange equations for
a vakonomic field $\hat{\Gamma}$ on $\C\times \R^{n-m}\subset
TQ\times\R^{n-m}$, namely
\[
\hat{\Gamma}(\vlift{X_i}(L))-\clift{X_i}(L)
=\mu_a{R}^a_{i\alpha}v^\alpha+\hat{\Gamma}(\mu_a)\delta^a_i,
\]
in this case become
\begin{align*}
\hat{\Gamma}(\vlift{X_\alpha}(L))-\clift{X_\alpha}(L)
&=\mu_a{R}^a_{\alpha\beta}v^\beta,\\
\hat{\Gamma}(\vlift{\tilde{E}_a}(L))&=\hat{\Gamma}(\mu_a).
\end{align*}
The latter just says that $p_a-\mu_a$, a function on
$TQ\times\R^{n-m}$, is a constant of motion for every vakonomic
dynamical vector field $\hat{\Gamma}$, or in other words that every
$\hat{\Gamma}$ is tangent to the level sets of $p_a-\mu_a$.  Since
$\hat{\Gamma}$ is defined only for $v^a=0$, we must in fact restrict
these functions to $\C\times\R^{n-m}$, and make the more comprehensive
statement that every $\hat{\Gamma}$ is tangent to the level sets of
$p_a-\mu_a$ in $\C\times\R^{n-m}$.

We observed in the previous section that it is preferable to restrict
the Euler-Lagrange equations of the vakonomic problem to the image of
some section $\phi$ of $\C\times\R^{n-m}\to\C$.  We have an obvious
choice of section in the present case, namely $\mu_a=p_a$, the zero
level set of $p_a-\mu_a$.  From the previous paragraph we see that
every vakonomic field $\hat{\Gamma}$ is tangent to $\im(\phi)$.  It
follows that with this choice of section, there is no difference
between weak and strong consistency for a Chaplygin system.  In
fact the second condition of Corollary~\ref{cor0} for strong
consistency, namely
$\Gamma(\phi_a)+\phi_bR^b_{a\alpha}v^\alpha=\lambda_a$, reduces to
$\Gamma(\phi_a)=\lambda_a$ with our choice of frame, and is satisfied
automatically when $\phi_a=p_a$.  Indeed, it is satisfied for
$\phi_a=p_a+k_a$ if $\Gamma(k_a)=0$, that is, if $k_a$ is a constant
of motion for the nonholonomic dynamics.  We work with the section
$\phi_a=p_a$ here; in the next section, however, we shall show that for
the two-wheeled carriage example the consistency conditions can
sometimes be satisfied with non-zero $k_a$.

We now consider the situation in the light of Theorem~\ref{Ltilde}.
This concerns the Euler-Lagrange equations of the Lagrangian
$\tilde{L}=L-\Phi_av^a$ where $\Phi_a$ is any extension of $\phi_a$
off $\C$.  In the present case (with $\phi_a=p_a$) there is no
difficulty in extending $\phi_a$ off $\C$:\ $\phi_a$ is just the
restriction to $\C$ of $p_a$, a well-defined function on $TQ$.  We can
therefore take $\tilde{L}=L-p_av^a$.  We shall discuss the dynamics of
this Lagrangian in the next few paragraphs.

We need to impose another regularity condition on $L$. The components
of the Hessian are denoted as follows:
\[
\vlift{X_\alpha}(\vlift{X_\beta}(L))=g_{\alpha\beta},\quad
\vlift{X_\alpha}(\vlift{\tilde{E}_a}(L))=g_{\alpha a},\quad
\vlift{\tilde{E}_a}(\vlift{\tilde{E}_b}(L))=g_{ab}.
\]
We now require that the submatrix $(g_{ab})$ of the Hessian is
nonsingular.  When this holds we say that $L$ is regular with respect
to $\g$.

We consider the Lagrangian $\tilde{L}=L-p_av^a$ on $TQ$.

\begin{prop}
If the Lagrangian $L$ of a Chaplygin system is regular with respect to $\D$ and to $\g$, the function $\tilde{L}=L-p_av^a$ is regular in some neighbourhood of $\C$ and invariant under the action of $G$.  Moreover, the corresponding Euler-Lagrange field $\tilde\Gamma$ is tangent to $\C$.
\end{prop}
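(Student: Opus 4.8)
The plan is to verify the three assertions of the proposition --- invariance, regularity near $\C$, and tangency --- in that order, with tangency following almost immediately from the other two once the right-hand side of the $a$-indexed Euler--Lagrange equation is identified. I would begin with invariance, acting with $\clift{\tilde{E}_a}$ on $\tilde{L}=L-p_bv^b$. Since $\clift{\tilde{E}_a}(L)=0$ by hypothesis, $\clift{\tilde{E}_a}(p_b)=-C^c_{ab}p_c$, and $\clift{\tilde{E}_a}(v^b)=C^b_{ac}v^c$, the Leibniz rule gives $\clift{\tilde{E}_a}(p_bv^b)=-C^c_{ab}p_cv^b+C^b_{ac}p_bv^c$, and relabelling the dummy indices in the second term shows the two contributions cancel. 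Hence $\clift{\tilde{E}_a}(\tilde{L})=0$, which is exactly invariance under $G$.

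Next, for regularity I would compute the Hessian $\tilde{g}_{ij}=\vlift{X_i}(\vlift{X_j}(\tilde{L}))$ and evaluate it on $\C$. The first vertical derivatives simplify cleanly: using $\vlift{X_i}(v^a)=\delta^a_i$, $\vlift{X_\beta}(p_a)=g_{\beta a}$, $\vlift{\tilde{E}_b}(p_a)=g_{ab}$, and $\vlift{\tilde{E}_a}(L)=p_a$, one obtains $\vlift{X_\alpha}(\tilde{L})=\vlift{X_\alpha}(L)-g_{\alpha b}v^b$ and, crucially, $\vlift{\tilde{E}_a}(\tilde{L})=-g_{ab}v^b$ --- here the $-p_a$ produced by differentiating $p_bv^b$ cancels against $\vlift{\tilde{E}_a}(L)=p_a$. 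Differentiating a second time and setting $v^a=0$ annihilates every term still carrying an explicit $v^b$, leaving the block-diagonal form $\tilde{g}|_\C=\mathrm{diag}(g_{\alpha\beta},-g_{ab})$. This is nonsingular precisely because $L$ is regular with respect to $\D$ (so $(g_{\alpha\beta})$ is nonsingular) and to $\g$ (so $(g_{ab})$ is nonsingular); by continuity $\tilde{g}$ stays nonsingular on a neighbourhood of $\C$, so Proposition~\ref{Prop1} yields a unique second-order Euler--Lagrange field $\tilde\Gamma$ there.

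Finally, for tangency I would exploit the invariance just established. The $a$-indexed Euler--Lagrange equation for $\tilde{L}$ in the frame reads $\tilde\Gamma(\vlift{\tilde{E}_a}(\tilde{L}))=\clift{\tilde{E}_a}(\tilde{L})=0$, and substituting $\vlift{\tilde{E}_a}(\tilde{L})=-g_{ab}v^b$ gives $\tilde\Gamma(g_{ab}v^b)=0$. On $\C$ the term in which $\tilde\Gamma$ differentiates $g_{ab}$ drops out, being multiplied by $v^b=0$, so this reduces to $g_{ab}\,\tilde\Gamma(v^b)=0$ on $\C$. Writing $\tilde\Gamma=v^i\clift{X_i}+\tilde\Gamma^i\vlift{X_i}$ as a second-order field, on $\C$ one has $\tilde\Gamma(v^a)=\tilde\Gamma^a$, since the remaining contribution $v^\alpha\clift{X_\alpha}(v^a)=-v^\alpha v^\beta R^a_{\alpha\beta}$ vanishes by the skew-symmetry of $R^a_{\alpha\beta}$ in its lower indices. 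Nonsingularity of $(g_{ab})$ therefore forces $\tilde\Gamma^a=0$ on $\C$, i.e.\ $\tilde\Gamma$ is tangent to $\C$. I expect the only genuine obstacle to be bookkeeping in the Hessian step --- in particular confirming the cancellation that produces $\vlift{\tilde{E}_a}(\tilde{L})=-g_{ab}v^b$ and that the mixed block $\tilde{g}_{\alpha a}$ vanishes on $\C$ --- since every other step is a one-line index manipulation or a direct appeal to the stated regularity hypotheses.
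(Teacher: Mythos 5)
Your proof is correct and follows essentially the same route as the paper: the same block-diagonal Hessian computation on $\C$ (blocks $g_{\alpha\beta}$ and $-g_{ab}$) for regularity, the same cancellation in $\clift{\tilde{E}_a}(p_bv^b)$ for invariance, and tangency deduced from conservation of $\tilde{p}_a=\vlift{\tilde{E}_a}(\tilde{L})=-g_{ab}v^b$ together with the nonsingularity of $(g_{ab})$. Your last step merely unpacks the paper's remark that the zero level set of $\tilde{p}_a$ is precisely $\C$ into an explicit computation of $\tilde\Gamma(v^a)$ on $\C$.
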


\begin{proof}
A short
calculation leads to the following expressions for the components of
the Hessian of $\tilde{L}$, where $E$ stands for the vector field
$v^a\vlift{\tilde{E}_a}$ (so that $\tilde{L}=L-E(L)$):
\begin{align*}
&\vlift{X_\alpha}(\vlift{X_\beta}(\tilde{L}))=g_{\alpha\beta}
-E(g_{\alpha\beta}),\\
&\vlift{X_\alpha}(\vlift{\tilde{E}_a}(\tilde{L}))=
-E(g_{\alpha a}),\\
&\vlift{\tilde{E}_a}(\vlift{\tilde{E}_b}(\tilde{L}))=
-g_{ab}-E(g_{ab}).
\end{align*}
Note that $E=0$ on $\C$, from which it follows that if $L$ is regular
with respect to $\D$ and to $\g$ then $\tilde{L}$ is regular on $\C$,
and therefore regular in some neighbourhood of $\C$ at least.  In such
a neighbourhood of $\C$ there is a unique second-order differential
equation field $\tilde{\Gamma}$ which satisfies
$\tilde{\Gamma}(\vlift{X_i}(\tilde{L}))-\clift{X_i}(\tilde{L})=0$, the
Euler-Lagrange equations of $\tilde{L}$.

Now $\tilde{L}$ (as well as $L$) is invariant under the action of $G$
on $TQ$ :\ since $\clift{\tilde{E}_a}(p_b)+C_{ab}^cp_c=0$,
\[
\clift{\tilde{E}_a}(\tilde{L})=
\clift{\tilde{E}_a}(L-p_bv^b)=
-(\clift{\tilde{E}_a}(p_b)+p_cC^c_{ab})v^b
=0.
\]
The momentum ${\tilde p}_a = \vlift{\tilde{E}_a}(\tilde{L})$ will
therefore be conserved by the dynamical field $\tilde\Gamma$, or in
other words $\tilde{\Gamma}$ will be tangent to the level sets of
${\tilde p}_a$.  But
\[
{\tilde p}_a=
\vlift{\tilde{E}_a}(L-p_av^a)=
p_a-g_{ab}v^b-p_a=-g_{ab}v^b.
\]
Thus if $L$ is regular with respect to $\g$, the zero level of
${\tilde p}_a$ is precisely $\C$. We can conclude therefore that the Euler-Lagrange field
$\tilde{\Gamma}$ of $\tilde{L}$ is tangent to $\C$.
\end{proof}

 This proposition extends a
result given in \cite{FandB} for the case of an Abelian Chaplygin
system to Chaplygin systems in general.

\begin{cor} Under the assumptions of the previous proposition, the vector field $\phi_*\tilde{\Gamma}^0$,  where $\tilde{\Gamma}^0$ is the restriction of $\tilde{\Gamma}$ to
$\C$, is a solution of the vakonomic problem on the image of $\phi$.
\end{cor}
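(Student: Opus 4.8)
The plan is to recognise this corollary as a direct instance of the converse direction of Theorem~\ref{Ltilde}, so that the proof reduces to checking that the two hypotheses of that converse are already supplied by the previous proposition. First I would pin down the extension that is being used. In the Chaplygin setting the chosen section is $\phi_a=p_a$, with $p_a=\vlift{\tilde{E}_a}(L)$, and the crucial point is that $p_a$ is a function defined on the whole of $TQ$, not merely on $\C$. It therefore furnishes a canonical extension $\Phi_a=p_a$ of $\phi_a$ off $\C$, and consequently the Lagrangian $\tilde{L}=L-p_av^a$ treated in the previous proposition is literally the Lagrangian $\tilde{L}=L-\Phi_av^a$ appearing in the statement of Theorem~\ref{Ltilde}. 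By construction $\Phi_a|_{\C}=p_a|_{\C}=\phi_a$, so the section recovered in the converse of Theorem~\ref{Ltilde} is exactly the section $\phi$ of the corollary.

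Next I would quote the previous proposition to verify the two hypotheses. Under the standing assumption that $L$ is regular with respect to both $\D$ and $\g$, that proposition asserts, on the one hand, that $\tilde{L}$ is regular in a neighbourhood of $\C$, so that it genuinely possesses an Euler-Lagrange field $\tilde{\Gamma}$ there; and, on the other hand, that this $\tilde{\Gamma}$ is tangent to $\C$. These are precisely the two ingredients demanded by the converse half of Theorem~\ref{Ltilde}: that $\tilde{\Gamma}$ be an Euler-Lagrange field of $\tilde{L}=L-\Phi_av^a$, and that it be tangent to $\C$.

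With both hypotheses in hand, the converse of Theorem~\ref{Ltilde} applies verbatim and yields at once that $\phi_*\tilde{\Gamma}^0$, the pushforward under $\phi$ of the restriction $\tilde{\Gamma}^0$ of $\tilde{\Gamma}$ to $\C$, is a solution of the vakonomic problem on $\im(\phi)$. There is essentially no obstacle to overcome: the analytic content has already been absorbed into Theorem~\ref{Ltilde} and into the regularity and tangency statements of the previous proposition, so the corollary is purely a matter of matching notation. The only point requiring any care, and it is immediate here, is confirming that the globally defined momentum $p_a$ really does serve as an admissible extension $\Phi_a$ whose restriction to $\C$ is the section $\phi_a$ — which holds by the very definition of $\phi$ in the Chaplygin case.
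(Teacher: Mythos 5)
Your argument is correct and is exactly the paper's route: the paper's proof consists of the single line ``This follows from Theorem~1,'' and your proposal simply spells out why the converse half of that theorem applies, namely that $\Phi_a=p_a$ is a globally defined extension of $\phi_a$ and that the previous proposition supplies both the regularity of $\tilde{L}$ near $\C$ and the tangency of $\tilde{\Gamma}$ to $\C$. Nothing is missing.
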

\begin{proof}
This follows from Theorem~\ref{Ltilde}.
\end{proof}

 The equations determining $\tilde{\Gamma}^0$ are now to be compared with those
for the nonholonomic dynamics.

\begin{prop}
Under the assumptions of the previous proposition, the restriction $\tilde{\Gamma}^0$ to $\C$ of the Euler-Lagrange field of  $\tilde L$ equals the nonholonomic field $\Gamma$ if and only if
\[
{R}^a_{\alpha\beta}v^\beta
p_a=0,
\]
or, equivalently, if and only if the conditions for weak (and thus strong) consistency are satisfied.
\end{prop}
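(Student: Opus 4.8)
The plan is to compare the two vector fields directly. Both $\Gamma$ and $\tilde{\Gamma}^0$ are of second-order type and tangent to $\C$, so each has the form $v^\alpha\clift{X_\alpha}+(\cdot)^\alpha\vlift{X_\alpha}$; hence their difference is purely vertical, say $\tilde{\Gamma}^0-\Gamma=c^\gamma\vlift{X_\gamma}$ for some functions $c^\gamma$ on $\C$. First I would apply this difference to the functions $\vlift{X_\alpha}(L)$: since $\clift{X_\alpha}(L)$ contributes identically to both fields, I get $(\tilde{\Gamma}^0-\Gamma)(\vlift{X_\alpha}(L))=c^\gamma\vlift{X_\gamma}(\vlift{X_\alpha}(L))=c^\gamma g_{\gamma\alpha}$. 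Because $L$ is regular with respect to $\D$, the matrix $(g_{\alpha\beta})$ is nonsingular on $\C$, so all the $c^\gamma$ vanish --- that is, $\tilde{\Gamma}^0=\Gamma$ --- if and only if $c^\gamma g_{\gamma\alpha}=0$ for every $\alpha$. Thus the whole comparison reduces to computing the $\D^0$-valued quantity $\tilde{\Gamma}^0(\vlift{X_\alpha}(L))-\clift{X_\alpha}(L)$ and matching it against the nonholonomic value, which is $0$ by the fundamental equations for $\Gamma$.

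Next I would evaluate $\tilde{\Gamma}^0(\vlift{X_\alpha}(L))-\clift{X_\alpha}(L)$ on $\C$. The field $\tilde{\Gamma}^0$ is the restriction to $\C$ of the Euler-Lagrange field of $\tilde{L}=L-p_av^a$, so it satisfies $\tilde{\Gamma}^0(\vlift{X_\alpha}(\tilde{L}))-\clift{X_\alpha}(\tilde{L})=0$ on $\C$. The computation carried out in the proof of Theorem~\ref{Ltilde}, specialised to the section $\phi_a=p_a$ (with any extension $\Phi_a$ off $\C$), shows that for a second-order field tangent to $\C$ one has $\tilde{\Gamma}^0(\vlift{X_\alpha}(\tilde{L}))-\clift{X_\alpha}(\tilde{L})=\tilde{\Gamma}^0(\vlift{X_\alpha}(L))-\clift{X_\alpha}(L)-p_aR^a_{\alpha\beta}v^\beta$ on $\C$. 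Putting the two together gives $\tilde{\Gamma}^0(\vlift{X_\alpha}(L))-\clift{X_\alpha}(L)=p_aR^a_{\alpha\beta}v^\beta$.

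Combining the two steps, $c^\gamma g_{\gamma\alpha}=p_aR^a_{\alpha\beta}v^\beta$, so $\tilde{\Gamma}^0=\Gamma$ holds if and only if $p_aR^a_{\alpha\beta}v^\beta=0$, i.e.\ $R^a_{\alpha\beta}v^\beta p_a=0$. To finish, I would recall from the Chaplygin discussion above that, for the section $\phi_a=p_a$, condition~(ii) of Theorem~\ref{Thm2} is exactly $p_aR^a_{\alpha\beta}v^\beta=0$, which is weak consistency, and that for this section weak and strong consistency coincide; this yields the stated equivalence. The step I expect to demand the most care is the second one: the value-on-$\C$ identity for $\tilde{\Gamma}^0$ is legitimate only because $\tilde{\Gamma}^0$ is genuinely tangent to $\C$ (established in the preceding proposition), so that it never differentiates transversally and the terms carrying factors $v^a$ drop out; without tangency the restriction to $\C$ of $\tilde{\Gamma}^0(\vlift{X_\alpha}(\tilde{L}))$ could not be replaced by $\tilde{\Gamma}^0(\vlift{X_\alpha}(L))$.
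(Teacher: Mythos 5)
Your proof is correct, but it is organised differently from the paper's. The paper works throughout with the constrained Lagrangian: it rewrites the restricted Euler--Lagrange equations of $\tilde L$ in the form $\tilde{\Gamma}^0(\vlift{X_\alpha}(\tilde{L}_c))-\clift{\bar{X}_\alpha}(\tilde{L}_c)=-R^a_{\alpha\beta}v^\beta\vlift{\tilde{E}_a}(\tilde{L})$, observes that $\tilde{L}_c=L_c$ and that $\vlift{\tilde{E}_a}(\tilde{L})=\tilde{p}_a=-g_{ab}v^b=0$ on $\C$, so that $\tilde{\Gamma}^0$ satisfies $\tilde{\Gamma}^0(\vlift{X_\alpha}(L_c))-\clift{\bar{X}_\alpha}(L_c)=0$, and then compares this with the nonholonomic equations $\Gamma(\vlift{X_\alpha}(L_c))-\clift{\bar{X}_\alpha}(L_c)=-R^a_{\alpha\beta}v^\beta p_a$; since either system determines its field uniquely by $\D$-regularity, the fields agree iff the right-hand sides do. You instead keep the unconstrained $L$, compute $\tilde{\Gamma}^0(\vlift{X_\alpha}(L))-\clift{X_\alpha}(L)=p_aR^a_{\alpha\beta}v^\beta$ by invoking the $\Phi_a$-extension identity from the proof of Theorem~1 with $\Phi_a=p_a$, and close the argument with the observation that the difference of two second-order fields tangent to $\C$ is vertical along $\D$, so that pairing with $\vlift{X_\alpha}(L)$ and the nonsingularity of $(g_{\alpha\beta})$ finishes the job. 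The two routes are logically equivalent --- the constrained form of the fundamental equations is exactly the identity you quote, rearranged --- and both rest on the same two pillars, namely the tangency of $\tilde{\Gamma}^0$ to $\C$ established in the preceding proposition (which you rightly single out as the delicate point) and regularity with respect to $\D$. What the paper's packaging buys is the explicit and independently interesting fact that $\tilde{\Gamma}^0$ is precisely the unconstrained dynamics of $L_c$ on $\C$ (right-hand side zero), which makes the comparison with $\Gamma$ immediate; what yours buys is that it recycles the Theorem~1 computation verbatim and makes the role of the vertical difference and the Hessian completely explicit. The slight looseness in identifying $p_aR^a_{\alpha\beta}v^\beta=0$ with ``weak (and thus strong) consistency'' via the section $\phi_a=p_a$ is present in the paper's own proof as well, and you resolve it the same way, by appeal to the earlier Chaplygin discussion.
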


\begin{proof}
The equations determining $\tilde{\Gamma}^0$ are of course the
restrictions to $\C$ of the Euler-Lagrange equations
$\tilde{\Gamma}(\vlift{X_i}(\tilde{L}))-\clift{X_i}(\tilde{L})=0$.
Those with $i=a$ just say that $\tilde{\Gamma}^0$ is tangent to $\C$.
The others may be written
\[
\tilde{\Gamma}^0(\vlift{X_\alpha}(\tilde{L}_c))
-\clift{\bar{X}_\alpha}(\tilde{L}_c)=
-{R}^a_{\alpha\beta}v^\beta\vlift{\tilde{E}_a}(\tilde{L}).
\]
Here $\tilde{L}_c$ is the restriction of $\tilde{L}$ to $\C$:\ but
this is clearly $L_c$. Now
$\vlift{\tilde{E}_a}(\tilde{L})=\tilde{p}_a=-g_{ab}v^b=0$ on $\C$. So
$\tilde{\Gamma}^0$ satisfies
\[
\tilde{\Gamma}^0(\vlift{X}_\alpha(L_c))-\clift{\bar{X}_\alpha}(L_c)=0.
\]
Since by assumption $L$ is regular with respect to $\D$, these
equations uniquely determine $\tilde{\Gamma}^0$, which is of
second-order type.

On the other hand, the equations determining the nonholonomic dynamics are
\[
\Gamma(\vlift{X_\alpha}(L_c))-\clift{\bar{X}_\alpha}(L_c)=
-{R}^a_{\alpha\beta}v^\beta p_a.
\]
We see that the two agree if and only if ${R}^a_{\alpha\beta}v^\beta
p_a=0$, which is of course the condition for weak consistency, and in
fact as we pointed out earlier for strong consistency as well. \end{proof}
 This
extends further results of \cite{FandB} on Abelian Chaplygin
systems to Chaplygin systems in general.

We pointed out above that $\phi_*\tilde{\Gamma}^0$ provides us with a
solution to the vakonomic problem, whether or not consistency holds.
There are further points of interest about the vakonomic problem in
this special case which we now discuss.

First, we can extend the action of $G$ from $Q$ to $T\D^0$
in a manner modelled on the action on the multipliers discussed
earlier, as follows.  Recall that at the beginning of the previous
section we said that the $\mu_a$ are really the components of a
covector $\mu$ in $\D^0\subset T^*Q$ with respect to the coframe that
is dual to the chosen frame $\{X_\alpha, {\tilde E}_a\}$.  We can also
interpret $\mu_a$ therefore as a function on $T^*Q$.  In fact each
vector field $X$ on $Q$ defines a linear function $\mu_X = \mu_i X^i$
on $T^*Q$, and $\mu_a$ is the function that corresponds to the vector
field ${\tilde E}_a$.  On the other hand, each vector field $W$ on $Q$
can be lifted to a vector field $W^{(1)}$ on $T^*Q$ (another complete
lift:\ see \cite{CP}) and the relation $W^{(1)} (\mu_X) =
\mu_{([W,X])}$ holds.  If we take $W= {\tilde E}_a$ and $X={\tilde
E}_b$, we get
\[
\tilde{E}_a^{(1)}(\mu_b)=-C_{ab}^c\mu_c.
\]
Now the action of $G$ on $Q$ extends to an action on $Q\times \g^* =
\D^\circ$ and therefore also induces an action on $T\D^\circ$.  The
infinitesimal generators of this last action are simply the vector
fields
\[
e_a = \clift{\tilde E}_a + \tilde{E}_a^{(1)}.
\]
The extended Lagrangian $\hat{L}=L-\mu_av^a$, which is a function on
$T\D^0$, is invariant under this action:
\[
e_a(\hat{L})=\clift{\tilde{E}_a}(L) - \tilde{E}_a^{(1)} ( \mu_b) -
\mu_b \clift{\tilde E}_a( v^b)= -\mu_cC^c_{ab}v^b+\mu_cC^c_{ab}v^b =0.
\]
The corresponding momentum components are just $\hat{p}_a=p_a-\mu_a$
(so far as this calculation is concerned, $\mu_a$ is a coordinate on
the base).  We thus obtain by a different method a result that we
pointed out earlier, namely that every Euler-Lagrange field of
$\hat{L}$ is tangent to the level sets of the functions $\hat{p}_a$.
The zero level set of $\hat{p}_a$, namely $\mu_a=p_a$, is of course
the section $\phi$ of $\C\times\R^{n-m}\to\C$ we are using.  Not only
is $\hat{\Gamma}$ tangent to this, so also is $e_a$:
\[
e_a(p_b-\mu_b)=-C_{ab}^c(p_c-\mu_c).
\]

The Euler-Lagrange equations for the vakonomic problem may be written
\[
\hat{\Gamma}(\vlift{X_\alpha}(L))-\clift{\bar{X}_\alpha}(L)
=\hat{p}_a{R}^a_{\alpha\beta}v^\beta,\quad
\hat{\Gamma}(\hat{p}_a)=0.
\]
Set $\hat{\Gamma}^0=\phi_*\tilde{\Gamma}^0$.  Then since
$\tilde{\Gamma}^0(\vlift{X}_\alpha(L))-\clift{\bar{X}_\alpha}(L) =0$
and $\hat{\Gamma}^0$ is tangent to the zero level of $\hat{p}_a$, we
have
\[
\hat{\Gamma}^0(\vlift{X_\alpha}(L))-\clift{\bar{X}_\alpha}(L)
=0,\quad
\hat{\Gamma}^0(\hat{p}_a)=0,
\]
which shows explicitly that $\hat{\Gamma}^0$ satisfies the
Euler-Lagrange equations of $\hat{L}$ on $\im(\phi)$.

It may not have escaped the notice of the reader that $\tilde{L}$ is
actually the Routhian of $L$ (see \cite{Routh,MRS}).  This does not
seem to be of any significance --- we are not interested in the
Euler-Lagrange field of $L$ or the level sets of $p_a$.  However,
there are Routhian overtones to the story.  In the first place, we may
consider the Routhian of $\tilde{L}$ itself.
This is given by
 \[
{\tilde L} - {\tilde p_a} v^a = L - p_av^a + g_{ab}v^av^b.
\]
(Observe that if $L=T-V$ is of mechanical type with $g_{a\alpha}=0$,
this Routhian is simply $L$.)  Its restriction to the level set
$v^a=0$ is simply the constrained Lagrangian $L_c:\C \to \R$.
Furthermore, the so-called Routh procedure, as described in
\cite{Routh}, gives an alternative way of deriving the equations for
$\tilde{\Gamma}^0$, which are in fact the generalized Routh equations
for the Routhian of $\tilde{L}$ on $\tilde{p}_a=0$.

Secondly, we may compute the Routhian of $\hat{L}$, which is
\[
\hat{L}-\hat{p}_av^a=L-\mu_av^a-(p_a-\mu_a)v^a=L-p_av^a=\tilde{L}.
\]
This is a function on $TQ$, in other words it is independent of
$\mu_a$, which itself is unusual and interesting.  Once again one can
use the Routh procedure to derive the equations for $\hat{\Gamma}$
restricted to any level set of $\hat{p}_a$.  These will be expressed
in terms of $\tilde{L}$, which reveals again the close relationship
between this Lagrangian and the vakonomic problem.

\section{Examples}

\subsection{A class of nonholonomic systems with constraints of a special form}

As a first application we look at systems on $\R^{k+2}$, with
coordinates $(q_1,q_2,q_a)$ and corresponding natural fibre
coordinates $(u_1,u_2,u_a)$, $a=3,4,\ldots,k+2$, where the constraints
take the rather special form
\[
{u}_a + \Delta_a(q_1) {u}_2 = 0
\]
for some functions $\Delta_a$, and where the Lagrangian is of the type
of a Euclidean metric
\[
L =\onehalf\left(I_1 {u}_1^2 + I_2 {u}_2^2 + \sum_a I_a {u}_a^2\right).
\]
(We shall not use Einstein's convention for sums over the index $a$ in
this section.)  This class of nonholonomic systems has been studied in
\cite{BFM}, where the authors investigated, using methods different
from ours, whether there is a Lagrangian whose Euler-Lagrange field
coincides with the nonholonomic dynamics when restricted to the
constraint submanifold. Evidently our Corollary~2 is relevant to this
question.  The class includes important classical examples such as the
nonholonomic particle, the vertically rolling disk, the knife edge on
a horizontal plane, the mobile robot with a fixed orientation, etc.

The frame
\[
\left\{\fpd{}{q_1},\fpd{}{q_2}-\sum_a \Delta_a\fpd{}{q_a}\right\}
\]
spans $\D$; it can be completed to a total frame by adding the
vector fields $\partial/\partial {q_a}$.  These last vector fields are
in fact the infinitesimal generators of the $\R^{k}$-action given by
$((\varepsilon_a),(q_1,q_2,q_a))\mapsto (q_1,q_2,q_a+ \varepsilon_a)$,
under which both the Lagrangian and the constraints are invariant
(they are even invariant under an $\R^{k+1}$-action).  The systems are
therefore of Chaplygin type. We write
\[
X_1=\vf{q_1},\quad X_2=\vf{q_2}-\sum_a\Delta_a\vf{q_a},\quad
X_a=\widetilde{E}_a=\vf{q_a}.
\]
In terms of the corresponding quasi-velocities $v_i$ the restriction of
the Lagrangian to the constraint submanifold $\C:v_a=0$ is just
\[
L_c=\onehalf\left(I_1v_1^2 +
\left(I_2+\sum_aI_a\Delta_a^2\right)v_2^2\right).
\]
The ideal candidate for a section to check strong consistency is given
by $\phi_a = p_a= \vlift{X}_a(L) = -I_a\Delta_av_2$ on $\C$ (no sum
over $a$ here).
Since the only non-vanishing bracket of vector fields
in the above frame is
\[
[X_1,X_2] =-\sum_a \Delta'_aX_a,
\]
the condition $\sum_a \phi_aR^a_{\alpha\beta}v^\beta = 0$ becomes
\[
\left(\sum_a I_a \Delta_a\Delta'_a\right)v_2^2= 0 \quad \mbox{and} \quad
\left(\sum_a I_a \Delta_a\Delta'_a\right)v_1v_2=0
\]
on $\C$.  Given that, in general, $v_1,v_2\neq 0$, this condition is
simply $\sum_a I_a \Delta_a \Delta'_a=0$. This last equation is the
condition for the function $I_2+\sum_aI_a\Delta_a^2$, the coefficient
of $v_2^2$ in $L_c$, to be constant; that is to say, it is the
condition for $L_c$ to have constant coefficients when expressed in
terms of the quasi-velocities.  The condition has another geometric
interpretation.  The function
\[
N(q_1) = \frac{1}{\sqrt{I_2+ \sum_a I_a \Delta^2_a(q_1)}}
\]
is the invariant measure density of the above nonholonomic system
(see \cite{BFM}); our condition is therefore the condition for the
system to admit a constant invariant measure density.

From Corollary~\ref{cor1} we can conclude that if $N$ is a constant,
the nonholonomic field $\Gamma$ is the restriction to $\C$ of an
Euler-Lagrange field of the Lagrangian $\tilde{L}=L-\sum_ap_av_a$,
which turns out to be
\[
\onehalf\left(I_1v_1^2+N^{-2}v_2^2-\sum_aI_av_a^2\right).
\]
In terms of the original variables we have
\[
\tilde{L} = \onehalf \left(I_1 {u}_1^2 + I_2 {u}_2^2 - \sum_a
I_a {u}_a^2 \right) - \sum_a \Delta_a I_a {u}_a {u}_2.
\]
This is exactly the statement (in the current terminology) in
Proposition~2 of \cite{BFM}.

Let us come to some details for a few specific examples.  The
Lagrangian of the nonholonomic particle is $L=\onehalf(u_1^2+ u_2^2 +
u_3^2)$ and its constraint is $u_3 + q_1 u_2 = 0$.  One easily
verifies that the nonholonomic vector field is given by
\[
\Gamma = v_1\clift{X_1}+v_2\clift{X_2}
-\frac{q_1v_1v_2}{1+q_1^2}\vlift{X_2}.
\]
(For this example, it so happens that there is no term in
$\vlift{X_1}$.) The vakonomic fields $\hat\Gamma$
take the form
\begin{align*}
\hat\Gamma&=v_1\clift{X_1}+v_2\clift{X_2}\\
&\phantom{=}\mbox{}-\mu q_2\vlift{X_1} +(\mu
v_1+q_1A)\vlift{X_2}+\left((1+q_1^2)A+v_1v_2-\mu
q_1v_1\right)\vlift{X_3}+A\vf{\mu},
\end{align*}
where $A$ is arbitrary.  The field for which $\Gamma_{\C}$ is tangent
to $\C$ (so that the coefficient of $\vlift{X_3}$ vanishes) is
the one where
\[
A=\frac{\mu q_1v_1 - v_1v_2}{1+q_1^2}.
\]
Obviously, the invariant measure density is not a constant, so there
is no strong consistency on $\mu=p$. On the other hand, $\Gamma_\C$
and $\Gamma$ evidently coincide for $\mu=0$, which illustrates our
observation that weak consistency always holds.

The vertically rolling disk also belongs to the class.  Let $R$ be the
radius of the disk.  If the triple $(x,y,z=R)$ stands for the
coordinates of its centre of mass, $\varphi$ for its angle with the
$(x,z)$-plane and $\theta$ for the angle of a fixed line on the disk
and a vertical line, then the nonholonomic constraints are of the form
$u_x = (R\cos\varphi) u_\theta$ and $u_y = (R \sin\varphi) u_\theta$.
The Lagrangian of the disk is $ L = \onehalf M (u_x^2 + u_y^2) +
\onehalf I u_\theta^2 + \onehalf J u_\varphi^2,$ where $I$ and $ J$
are the moments of inertia and $M$ is the total mass of the disk.  The
identification with the notations above is $(q_1,q_2;q_a)=
(\varphi,\theta;x,y)$.  It is easy to see that $N$ is a constant here
and that the nonholonomic and the vakonomic equations are consistent.
For a detailed calculation of those equations, see e.g.\ \cite{Bloch}.

\subsection{A two-wheeled carriage}

Consider a two-wheeled carriage which can move on a horizontal plane
in the direction in which it points and which can spin around a
vertical axis; the wheels roll without slipping over the plane. This
object is sometimes called a planar mobile robot, but we do not use
this description since we think that the word `robot' should be
reserved for devices that are subject to controls of some kind.


\begin{center}
\includegraphics[scale=0.8]{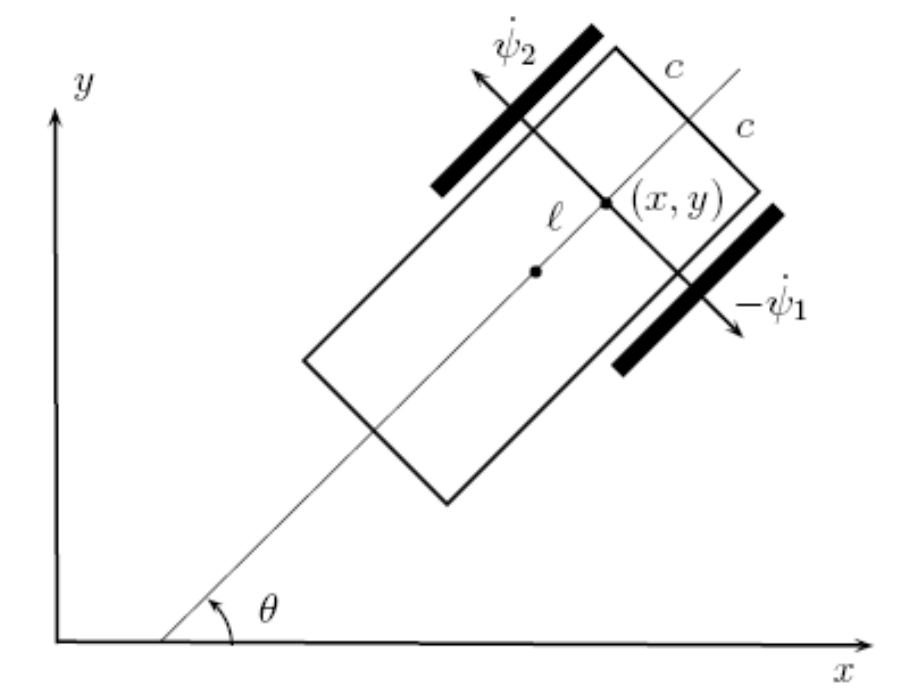}
\end{center}

The angle of rotation about the vertical is denoted by $\theta$, and
the positions of the wheels are characterized by angles $\psi_1$
and $\psi_2$.  We shall generally assume that the intersection point
$(x,y)$ of the horizontal symmetry axis of the carriage with the line
which connects the two wheels is not necessarily the centre of mass of
the system, but lies at a distance $\ell$ from it.

The configuration space of the system is $S^1 \times S^1 \times
SE(2)$, with coordinates $(\psi_1,\psi_2,x,y,\theta)$.
The Lagrangian is
\[
L= \onehalf m (u_x^2 + u_y^2) +
m_0 \ell {u_\theta}((\cos\theta) u_y - (\sin\theta) u_x)
+\onehalf J {u_\theta}^2 + \onehalf J_2 (u_{\psi_1}^2 + u_{\psi_2}^2 )
\]
where $m_0$ is the mass of the body, $m=m_0+2m_1$ is the mass of the
complete system, $J$ is the moment of inertia of the whole system
around a vertical axis through $(x,y)$ and $J_2$ is the axial moment
of inertia for the wheel.  The constraints are
\[
u_x = -\frac{R}{2} \cos\theta (u_{\psi_1} + u_{\psi_2}),\quad
u_y = -\frac{R}{2} \sin\theta (u_{\psi_1} + u_{\psi_2}), \quad
u_\theta = \frac{R}{2c}  (u_{\psi_2} - u_{\psi_1}).
\]
Here $R$ is the radius of a wheel, and $c$ is half the length of the
axle.  This example, including the formulae for the Lagrangian and the
constraints, can be found in the classic reference \cite{Neimark}, and
in many more texts, see e.g.\ \cite{Favretti,FandB,KellyMurray} (the
case where $\ell=0$) and \cite{Frans,CDMM,DLMDD,Koiller} (the general
case).

The distribution $\D$ is spanned by the vector fields
\begin{align*}
X_1 &= \fpd{}{\psi_1} -
\frac{R}{2}\left(\cos\theta \fpd{}{x}+\sin\theta\fpd{}{y} +
\frac{1}{c}\fpd{}{\theta}\right), \\
X_2 &= \fpd{}{\psi_2} -
\frac{R}{2}\left(\cos\theta \fpd{}{x}+\sin\theta\fpd{}{y} -
\frac{1}{c}\fpd{}{\theta}\right).
\end{align*}
The Lagrangian and the constraints are both invariant under the
(usual) $SE(2)$-action and this is an example of a Chaplygin
system.  By adding the fundamental vector fields
\[
X_3=\widetilde{E}_3=\fpd{}{x}, \quad
X_4=\widetilde{E}_4=\fpd{}{y}, \quad
X_5=\widetilde{E}_5=\fpd{}{\theta} - y \fpd{}{x} + x\fpd{}{y},
\]
we get a full frame.  We have $v_1 = u_{\psi_1}$,
$v_2 = u_{\psi_2}$, and the constraints are simply $v_3 =
v_4 =v_5 = 0$, where
\begin{align*}
u_x&=-\onehalf R\cos\theta(v_1+v_2)+v_3-yv_5,\\
u_y&=-\onehalf R\sin\theta(v_1+v_2)+v_4+xv_5,\\
u_\theta&=-\frac{R}{2c}(v_1-v_2)+v_5.
\end{align*}
We have written the quasi-velocities implicitly like this to make it
easier to calculate the expression for $L$ in terms of them.  We don't
in fact need the full expression, so we shan't write it down.  First,
we give the expression for $L_c$:
\[
L_c=\onehalf\left(\frac{R^2}{4c^2}(J+mc^2)+J_2\right)(v_1^2+v_2^2)
-\frac{R^2}{4c^2}(J-mc^2)v_1v_2.
\]
Note that the coefficients are constants.  So far as the full
Lagrangian is concerned, it will shortly become apparent that we
need only those additional terms which are linear in $v_3$ and $v_4$
with coefficients involving only $v_1$ and $v_2$:\ we have
\[
L=L_c-\onehalf mR(v_1+v_2)(v_3\cos\theta+v_4\sin\theta)
+\frac{m_0\ell R}{2c}(v_1-v_2)(v_3\sin\theta-v_4\cos\theta)+\ldots
\]
We proceed to the calculation of the nonholonomic dynamical vector
field. For this we need the bracket of $X_1$ and $X_2$:
\[
[X_1,X_2]=\frac{R^2}{2c}\left((\sin\theta)X_3-(\cos\theta)X_4\right),
\]
from which we obtain the following formulae for derivatives of
quasi-velocities:
\begin{align*}
\clift{X_1}(v_3)&=-\frac{R^2}{2c}(\sin\theta)v_2,\quad
\clift{X_1}(v_4)=\frac{R^2}{2c}(\cos\theta)v_2,\\
\clift{X_2}(v_3)&=\frac{R^2}{2c}(\sin\theta)v_1,\phantom{-}\quad
\clift{X_2}(v_4)=-\frac{R^2}{2c}(\cos\theta)v_1;
\end{align*}
the other derivatives of quasi-velocities along $\clift{X_\alpha}$,
$\alpha=1,2$, are zero (in particular, those of $v_\alpha$).  The
terms not written explicitly in the formula for $L$ above will give
zero when we calculate $\clift{X_\alpha}(L)$ and set $v_3=v_4=v_5=0$.
We have
\[
\clift{X_1}(L)=-\frac{m_0\ell R^3}{4c^2}(v_1-v_2)v_2,\quad
\clift{X_2}(L)=\frac{m_0\ell R^3}{4c^2}(v_1-v_2)v_1.
\]
Recall that $v^\alpha\clift{X_\alpha}$ is tangent to $\C$, and note
that in this case $v^\alpha\clift{X_\alpha}(v^\beta)=0$. For
convenience we write
\[
L_c=\onehalf P(v_1^2+v_2^2)-Qv_1v_2,
\quad P=\frac{R^2}{4c^2}(J+mc^2)+J_2,
\quad Q=\frac{R^2}{4c^2}(J-mc^2).
\]
With
$\Gamma=v_1\clift{X_1}+v_2\clift{X_2}+\Gamma_1\vlift{X_1}+\Gamma_2\vlift{X_2}$
we have
\begin{align*}
\Gamma(\vlift{X_1}(L)-\clift{X_1}(L)&=
P\Gamma_1-Q\Gamma_2+K(v_1-v_2)v_2=0\\
\Gamma(\vlift{X_2}(L)-\clift{X_2}(L)&=
-Q\Gamma_1+P\Gamma_2-K(v_1-v_2)v_1=0,
\end{align*}
where we have written $K$ for $m_0\ell R^3/4c^2$.  Then by simple algebra
\[
\Gamma_1=\frac{K}{P^2-Q^2}(v_1-v_2)(Qv_1-Pv_2),\quad
\Gamma_2=\frac{K}{P^2-Q^2}(v_1-v_2)(Pv_1-Qv_2).
\]
This agrees with the expression of the dynamics in \cite{Koiller}, and
corrects an unfortunate misprint in \cite{DLMDD}.  Notice that $K=0$
when $\ell=0$; in this case the nonholonomic equations say simply that
in any motion $v_1$ and $v_2$ are constants, that is, that the wheels
rotate with constant (in general different) speeds.

In view of the fact that $R^5_{12}=0$, to test for consistency we do
not need to know $p_5$. From the truncated formula for $L$ given
earlier, on $\C$ we have
\begin{align*}
p_3&=-\onehalf mR(v_1+v_2)\cos\theta+\frac{2cK}{R^2}(v_1-v_2)\sin\theta,\\
p_4&=-\onehalf mR(v_1+v_2)\sin\theta-\frac{2cK}{R^2}(v_1-v_2)\cos\theta.
\end{align*}
Thus
\[
p_aR^a_{12}=\frac{R^2}{2c}((\sin\theta)p_3-(\cos\theta)p_4)=
K(v_1-v_2).
\]
It follows that consistency holds (strongly, since this is a
Chaplygin system) with $\phi_a=p_a$ if and only if $\ell=0$.
(Actually, we can choose $\phi_5$ arbitrarily subject to the
condition $\Gamma(\phi_5)=\lambda_5$.)

However, we are not restricted to taking $\phi_a=p_a$. For a
Chaplygin system, the second condition for strong consistency is
$\Gamma(\phi_a)=\lambda_a$; this is certainly satisfied by
$\phi_a=p_a$, but it is also satisfied by $\phi_a=p_a+k_a$ where
$k_a$ is any constant of motion, $\Gamma(k_a)=0$. So we may enquire
whether, when $\ell\neq0$, there are constants of motion $k_3$ and
$k_4$ such that $(p_a+k_a)R^a_{12}=0$, that is, such that
\[
(\sin\theta)k_3-(\cos\theta)k_4=-\frac{2cK}{R^2}(v_1-v_2).
\]
Evidently, we should search for constants of motion which are linear
in $v_1$ and $v_2$ with coefficients linear in $\cos\theta$ and
$\sin\theta$. So first let us set
\[
k=f_1(\theta)v_1+f_2(\theta)v_2.
\]
Now so far as their action on functions of $\theta$ is concerned, both
$X_1$ and $X_2$ are just $\partial/\partial\theta$ up to a constant
factor.  Thus
\begin{align*}
\Gamma(k)&=
(v_1\clift{X_1}+v_2\clift{X_2}+\Gamma_1\vlift{X_1}+\Gamma_2\vlift{X_2})(k)\\
&=-\frac{R}{2c}(v_1-v_2)(f_1'v_1+f_2'v_2)+\Gamma_1f_1+\Gamma_2f_2.
\end{align*}
It will be convenient to set
\[
\Gamma_1=\frac{R}{2c}(v_1-v_2)(\hat{Q}v_1-\hat{P}v_2),\quad
\Gamma_2=\frac{R}{2c}(v_1-v_2)(\hat{P}v_1-\hat{Q}v_2),
\]
which means taking
\[
\hat{P}=\frac{2cKP}{R(P^2-Q^2)},\quad
\hat{Q}=\frac{2cKQ}{R(P^2-Q^2)}.
\]
Then for $k$ to be a constant of motion we require that
\[f_1'v_1+f_2'v_2=(\hat{Q}v_1-\hat{P}v_2)f_1+(\hat{P}v_1-\hat{Q}v_2)f_2,
\]
and this for all $v_1$ and $v_2$. Thus $f_1$ and $f_2$ must satisfy
\[
\left[\begin{array}{c}f_1'\\f_2'\end{array}\right]
=
\left[\begin{array}{cc}\hat{Q}&\hat{P}\\-\hat{P}&-\hat{Q}\end{array}\right]
\left[\begin{array}{c}f_1\\f_2\end{array}\right].
\]
But then
\[
\left[\begin{array}{c}f_1''\\f_2''\end{array}\right]
=
\left[\begin{array}{cc}\hat{Q}^2-\hat{P}^2&0\\0&\hat{Q}^2-\hat{P}^2\end{array}\right]
\left[\begin{array}{c}f_1\\f_2\end{array}\right].
\]
So if we want $f_1$ and $f_2$ to be linear functions of $\cos\theta$
and $\sin\theta$ we had better ensure that $\hat{Q}^2-\hat{P}^2=-1$,
that is, $R^2(P^2-Q^2)=4c^2K^2$, which in terms of the original
parameters can be written
\[
\ell=
\frac{\sqrt{\left(mR^2+2J_2\right)\left(R^2J+2c^2J_2\right)}}{m_0R^2}.
\]
So when the body of the carriage has the special position relative to
the axle which is specified by this value of $\ell$, the system admits
constants of motion of the required form.

Assume this condition is satisfied:\ can we find $k_3$ and $k_4$
such that
\[
(\sin\theta)k_3-(\cos\theta)k_4=-\frac{2cK}{R^2}(v_1-v_2)=
-\hat{K}(v_1-v_2)?
\]
Inspection of this formula suggests taking
\begin{align*}
k_3&=-\hat{K}\sin\theta(v_1-v_2)+H\cos\theta(v_1+v_2)\\
&=(-\hat{K}\sin\theta+H\cos\theta)v_1+(\hat{K}\sin\theta+H\cos\theta)v_2,\\
k_4&=\hat{K}\cos\theta(v_1-v_2)+H\sin\theta(v_1+v_2)\\
&=(\hat{K}\cos\theta+H\sin\theta)v_1+(\hat{K}\cos\theta-H\cos\theta)v_2.
\end{align*}
In order that $k_3$ should actually be a constant of motion its
coefficients must satisfy $f_1'=\hat{Q}f_1+\hat{P}f_2$; this holds,
for $\hat{P}^2-\hat{Q}^2=1$, provided that
\[
H=-\hat{K}(\hat{P}-\hat{Q})=-\frac{2cK}{R^2}(\hat{P}-\hat{Q}).
\]
There are three other conditions to be satisfied if both $k_3$ and
$k_4$ are to be constants of motion; they all lead to this same
formula for $H$ by virtue of the fact that $\hat{P}^2-\hat{Q}^2=1$.
In terms of the original parameters
\[
H=-\frac{mR^2+2J_2}{2R}
\]
for what that's worth.  The main point is that when $\ell$ has the
special value (in terms of the other parameters) given above, the
nonholonomic and vakonomic problems are strongly consistent, via a
section $\phi_a=p_a+k_a$ where the $k_a$ are constants of motion with
$k_3$ and $k_4$ as above; $k_5$ can be chosen arbitrarily (subject to
it being a constant of motion).  In particular, in such a case, as
well as in the case $\ell=0$, the nonholonomic dynamics is the
restriction to $\C$ of the Euler-Lagrange field of a Lagrangian
$\tilde{L}$.

Our solution disagrees with the solution in \cite{CDMM}, but a careful
dimensional analysis easily shows that the solution in \cite{CDMM}
cannot be correct.

\subsubsection*{Acknowledgements}
The first author is a Guest Professor at Ghent University:\ he is
grateful to the Department of Mathematical Physics and Astronomy at
Ghent for its hospitality.  The second author is a Marie Curie Fellow
within the 6th European Community Framework Programme and a
Postdoctoral Fellow of the Research Foundation -- Flanders (FWO).

\end{document}